%% file: main.tex
\documentclass[conference, final]{IEEEtran}
\usepackage[nolist]{acronym}
\usepackage{tikz}
\usetikzlibrary{decorations.pathreplacing,calligraphy,dsp}
\usepackage{pgfplots}
\usepgfplotslibrary{groupplots}
\usepackage{graphicx}
\usepackage{subfloat}
\usepackage[font=footnotesize]{caption}
\usepackage{float}
\usepackage{array}
\pgfplotsset{compat=newest}
\usepackage{amsmath, amsbsy, amssymb}
\usepackage{tikzscale}
\usepackage{color}
\usepackage{epigraph}
\usepackage{circuitikz}
\usepackage{multirow}
\usepackage{booktabs}
\usepackage{adjustbox}
\usepackage[group-separator={,}]{siunitx}
\definecolor{darkgreen}{rgb}{0.125,0.5,0.169}
\usetikzlibrary{shapes,arrows,fadings,chains}
\usepackage{marvosym}
\usepackage{bbm}
\usepackage{mathtools}
\usepackage{amssymb}
\usepackage{amsmath}
\usepackage{tikz}
\usepackage{pgfplots}
\usepackage[inline]{enumitem}
\usepackage{amsthm}
\usepackage{xcolor}
\usepackage{bbm}
\newtheorem{lemma}{Lemma}
\newtheorem{corollary}{Corollary}
\newtheorem{proposition}{Proposition}

\usetikzlibrary{positioning,fit,calc}
\usetikzlibrary{calc,patterns,angles,quotes}
\usetikzlibrary{shapes.geometric, arrows}
\usetikzlibrary{fit}
\usepgfplotslibrary{colormaps}
\usetikzlibrary{shapes.multipart}
\usetikzlibrary{positioning,fit,calc}
\usetikzlibrary{calc, positioning}
\usetikzlibrary{decorations}
\usetikzlibrary{decorations.pathreplacing}
\usetikzlibrary{arrows.meta,shapes.arrows}
\usetikzlibrary{shapes}
\usetikzlibrary{positioning,fit,backgrounds}
\usetikzlibrary{matrix}
\usetikzlibrary{arrows.meta,decorations.pathreplacing}
\usetikzlibrary{
    arrows.meta,
    positioning,
    decorations.pathreplacing,
    calc
}
\usepgfplotslibrary{groupplots}
\usepackage{algorithm}
\usepackage{algorithmic}
\usetikzlibrary{arrows.meta,calc,positioning,fit,backgrounds}
\pgfplotsset{compat=1.13}
\tikzset{>=latex}
\usepackage{tikzit}
\input{macros.tex}

\input{acronyms.tex}
\input{corporateColours.tex}

\input{plotcyclelists.tex}
\input{system.tikzstyles}
\usepackage{pgfplots}
\usepackage{pgfplotstable}
\usepackage{tikz}
\usepackage{braket}
\usepackage{comment}
\usepackage{dblfloatfix}
\usepackage{adjustbox}
\usepackage[caption=false,font=footnotesize]{subfig}

\usetikzlibrary{arrows.meta,positioning,calc}

\pgfplotsset{
    discard if not/.style 2 args={
        x filter/.code={
            \edef\tempa{\thisrow{#1}}
            \edef\tempb{#2}
            \ifx\tempa\tempb
            \else
                
            \fi
        }
    }
}
\pgfplotsset{
    colormap={mysummer}{
        color(0cm)=(green!60!black)
        color(1cm)=(yellow!85!white)
    }
}
\pgfplotsset{compat=1.18}

\definecolor{violett}{RGB}{128,0,128}
\IEEEoverridecommandlockouts
\begin{document}

\title{Auxiliary Nodes for BP Decoding of Quantum LDPC Codes}

\author{\IEEEauthorblockN{Daniel Tandler, Paul Bezner, and Stephan ten Brink}

\IEEEauthorblockA{
Institute of Telecommunications, University of Stuttgart, Pfaffenwaldring 47, 70569 Stuttgart, Germany \\
\{tandler, bezner, tenbrink\}@inue.uni-stuttgart.de
}

}

\maketitle
\begin{abstract}
Many recently proposed \ac{CSS} \ac{QLDPC} codes have sparse decoding
graphs, enabling syndrome-based \ac{BP} decoding at low complexity.
Their construction, however, often results in properties that impair \ac{BP}
performance, such as short cycles and degeneracy. 
In this work, we propose a general framework for introducing \acp{AVN} and \acp{ACN} into the decoding graph of \ac{CSS} codes, compatible with the standard stabilizer measurement framework.
This provides an additional degree of freedom in the design of the decoding graph itself and can be used to tackle the aforementioned shortcomings.
We show that recently proposed techniques, $4$-cycle removal and subcode ensemble decoding, can be interpreted as instances of this framework.
For $4$-cycle removal, we find that the gains depend strongly on the \ac{BP} iteration count and check-node message scaling. 
Building on this framework, we further propose a graph-derived subcode ensemble decoder and demonstrate under circuit-level noise that it substantially reduces the per-round logical error rate compared with \ac{BP} on the corresponding $4$-cycle-free decoding graph.
\end{abstract}

\acresetall

\section{Introduction}
\Ac{BP} decoding is attractive for \ac{QLDPC} codes due to its low complexity and direct compatibility with sparse parity-check representations. However, the performance of \ac{BP} decoding is highly sensitive to the chosen Tanner-graph representation.
In particular, trapping sets, often consisting of \emph{short cycles}, and \emph{degeneracy}-induced symmetries can strongly degrade the decoding performance.
Short cycles in the Tanner graph cause the messages in \ac{BP} to become strongly correlated and thus oscillate and reinforce potential wrong beliefs. 
Degeneracy refers to the phenomenon that can occur in syndrome based decoding, where multiple error hypotheses can share the same syndrome.
While in theory finding a single one of these hypotheses can be sufficient in the quantum setting, this ambiguity can strongly impair the \ac{BP} decoder's performance.
This is especially strong in \ac{QLDPC} codes as they tend to have low-weight stabilizers.
One line of research aims to circumvent these issues by applying different decoding methods, usually after a first pass of \ac{BP} decoding \cite{decision_tree,ambiguity,roffe_decoding_2020}.
While improving performance, this can also increase the worst-case latency, which might be problematic for practical \ac{QEC}.
Other approaches \cite{müller2025improvedbeliefpropagationsufficient, koutsioumpas2025colourcodesreachsurface, CAGNN, ye2025beamsearchdecoderquantum}, aim to introduce decoding diversity by changing the decoding trajectory that \ac{BP} sees during different decoding runs.
Yet other approaches try to address these issues more directly by modifying the concrete decoding graph that \ac{BP} uses:
In \cite{KIT1}, a neural decoder for quaternary \ac{BP} together with \emph{overcomplete} parity check matrices has been investigated. 
\cite{KIT2} proposes a joint code and decoder scheme by concentrating all $4$-cycles in the parity check matrix of the code on a single \ac{VN}, effectively removing them from the graph when using a decoder ensemble on that \ac{VN}.
\cite{yin2024symbreak} proposes to \emph{break} potentially problematic checks that cause degeneracy issues after a first \ac{BP} decoding pass.
Structural decoding graph modifications in the form of $4$-cycle removal for the decoding graph of \ac{BP} decoders using \ac{ACN} and \ac{AVN} were first introduced for classical channel codes \cite{yedidia2002generating, yifei_1}.
Recently, \cite{gari} introduced a related graph augmentation (\ac{GARI}) specifically targeting the $4$-cycles induced by $Y$-type errors in the \ac{DEM}, while proving equivalence of the decoding problem for their construction.
\Ac{ASCED} was first introduced for classical codes in \cite{mandelbaum2026affinesubcodeensembledecoding}.
In \cite{wursthorn2026affine}, \ac{ASCED} is adapted for the Quantum Decoding setting by introducing additional linear independent parity check constraints acting as \emph{degeneracy splitters}.
In this work, we show that these two latter viewpoints, short cycle removal and subcode ensemble approaches, can be connected through a common framework.
\begin{figure}[!t]
    \centering
        \begin{adjustbox}{width=0.5\textwidth}
    \input{fig/system_model_updated.tex}
      \end{adjustbox}
    \caption{Decoding architecture of the considered framework: $r$ syndrome measurement rounds of an encoded quantum state yield the syndrome string $\mathbf{s}_{\mathrm{DEM}}$. Using the linear map $\mathbf{A}$, this syndrome is mapped to $\mathbf{s}_{\mathrm{aug}}$, which is suitable for decoding on the augmented parity-check matrix $\mathbf{H}_{\mathrm{aug}}$. After syndrome-based decoding, the proposed correction $\hat{\mathbf{e}}_{\mathrm{DEM}}$ is output.}
    \label{fig:system_model}
\end{figure}
We make the following contributions:
\begin{itemize}
    \item We formalize a general framework for augmenting the decoding graph of \ac{CSS} codes with \acp{AVN} and \acp{ACN}, fully compatible with standard stabilizer measurements.\footnote{No additional stabilizer or detector measurements are required.}

    \item We show this construction has two distinct interpretations: \acp{ACN} can be used for structural modifications of the decoding graph, e.g. for removing $4$-cycles, while fixing an \ac{AVN} imposes a potentially degeneracy-splitting affine subcode constraint.

    \item We construct a graph-derived affine subcode ensemble by reusing the \acp{ACN} and \acp{AVN} from $4$-cycle removal as candidate constraints, in both a fixed and an adaptive (\ac{BP}-guided) variant.
\end{itemize}
The complete decoding architecture for the proposed method is shown in Fig. \ref{fig:system_model}.
\textbf{\textit{Notation}}: 
Scalars are denoted by lowercase, non-bold letters; vectors by lowercase, bold letters; and matrices by uppercase, bold letters. Unless stated otherwise, all vectors are column vectors.
Furthermore, $\mathrm{rank}(\mathbf{H})$ denotes the rank of $\mathbf{H}$ over $\mathbb{F}_2$, $\mathrm{row}(\mathbf{H})$ its row space, $\mathrm{supp}(\mathbf{h}^T)$ the support of row-vector $\mathbf{h}^T$ and $\mathbb{S}$ denotes a linear space.
If not mentioned otherwise, operations are assumed to be performed in the binary field $GF(2)$.
For $\mathbf{H}\in\mathbb{F}_2^{m\times n}$ and $\mathbf{s}\in\mathbb{F}_2^{m}$ we write
$\mathcal{E}(\mathbf{H},\mathbf{s})\coloneqq\{\mathbf{x}\in\mathbb{F}_2^{n}:\mathbf{H}\mathbf{x}=\mathbf{s}\}$,
and $\pi$ for the projection of $\mathbb{F}_2^{n'}$,$n'>n$ onto its first $n$ coordinates.
We denote the $i$th row of a matrix $\mathbf{H}$ by $\mathbf{h}\tp _i$. For an index set $\mathcal{S}\subseteq{1,\ldots,n}$, $\mathbbm{1}_{\mathcal{S}}\in\mathbb{F}_2^n$ denotes its indicator vector, whose $j$th entry equals $1$ if $j\in\mathcal{S}$ and $0$ otherwise. The support of a binary vector $\mathbf{x}$ is denoted by $\operatorname{supp}(\mathbf{x})$.
\section{Preliminaries}
\label{sec:prelim}
\subsection{System Model} \label{sec:system_model_section}
In this work, we consider \ac{QEC} using \ac{CSS} stabilizer codes. 
A \ac{CSS} code can be described by a pair of binary matrices $\mathbf{H}_X$, corresponding to $X$-type stabilizers, and $\mathbf{H}_Z$, corresponding to $Z$-type stabilizers.
Error detection and correction can be performed by \emph{measuring} the stabilizers and using the binary measurement results to infer the error location.
In practice this syndrome extraction is realized using a sequence of quantum gates, resulting in a measurement \emph{circuit}.
As this process can be noisy itself, it can introduce additional errors.
This work adopts the widely recognized \emph{circuit-level noise model} (together with the corresponding \ac{DEM}) for Quantum Error Correction as described in e.g. \cite{Derks_2025, Bravyi_2024, Kang2025quitsmodularqldpc}, which can also model these errors. 
Here, it is assumed that independent errors can occur at any position in the syndrome extraction circuit (state preparation, measurement, single- and two-qubit gates).
We assume an equal probability $p$ for each error class.
To account for these additional potential errors, $r$ repeated syndrome measurement rounds are performed.\footnote{Following conventions, we assume that the final measurement round is noiseless. We furthermore assume for simplicity and ease of exposition a $Z$-logical memory experiment where only the $Z$-type stabilizers are measured.}
This measurement circuit produces a collection of binary measurement outcomes from which one can form binary parities whose values are deterministic in the absence of errors.
These parities are usually referred to as \emph{detectors}.
Logical observables are similar to detectors with the difference that they only have a known value if the logical state is known.
Altogether, the decoder-relevant quantities in this model can be arranged in the following way \cite{Derks_2025} \cite{ambiguity}:
\begin{itemize}
\item Detector matrix $\mathbf{H}_{\text{DEM}} \in \mathbb{F}_2^{M \times N}$: Each row corresponds to a detector and each column to an error location. $\mathbf{H}_{ij}=1$ if the $j$-th error flips the $i$-th detector
\item Logical matrix $\mathbf{L} \in \mathbb{F}_2^{K \times N}$: Each row corresponds to a logical observable and each column to an error location. $\mathbf{L}_{ij}=1$ if the $j$-th error flips the $i$-th logical observable.
\item Prior vector $\tau \in \mathbb{R}^N$: Probability that each error occurs. 
\end{itemize}
As some errors can have an identical (no) effect on the measurements and logical observables (duplicated (all zero) columns in $\mathbf{H}_{\text{DEM}}$ and $\mathbf{L}$ ), the duplicate (all zero) columns can be deleted and their probabilities combined (removed).
Furthermore, for each instance of the noisy syndrome extraction circuit, the following can be defined: 
\begin{itemize}
\item Error vector $\mathbf{e}_{\text{DEM}} \in \mathbb{F}_2^N$: Each entry corresponds to an individual error that occurred in the syndrome extraction circuit.
\item Measured syndrome $\mathbf{s}_{\text{DEM}} \in \mathbb{F}_2^M$: Each entry corresponds to the measurement outcome of the corresponding detector.
\item Logical flips $\mathbf{l} \in \mathbb{F}_2^K$: $l_k$ indicates if $\mathbf{e}$ flipped the $k$-th logical observable.
\end{itemize}
Using this formalism, the decoding problem can be formulated as:
\begin{equation} \label{eq:system_model}
    \mathbf{s}_{\text{DEM}} = \mathbf{H}_{\text{DEM}}\mathbf{e}_{\text{DEM}}
\end{equation}
The task of the decoder is now to find an $\mathbf{\hat{e}} \in \mathbb{F}_2^N$ such that $ \mathbf{s}_{\text{DEM}}  = \mathbf{\hat{s}} = \mathbf{H}_{\text{DEM}}\mathbf{\hat{e}}$ and 
\begin{equation} \label{eq:system_logical}
    \mathbf{L(\mathbf{e}_{\text{DEM}} +\mathbf{\hat{e}} )} = \mathbf{0}
\end{equation}
thus, no logical error remains after applying the correction $\mathbf{\hat{e}}$.
\subsection{Syndrome-Based Belief Propagation (BP) Decoding}
Similar to their classical counterparts, \ac{QLDPC} codes can be decoded using a variant of the message passing \ac{BP} decoder operating on the decoding graph of the code, derived from the \ac{DEM}.
The decoder iteratively exchanges scalar messages between the graph's \acp{VN} and \acp{CN} and outputs an estimate of the marginal probability of error for each \ac{VN}.
Decoding is finished if a maximum number of iterations $I$ is reached.
The initial information for the decoder is given by the measured syndrome vector $\mathbf{s}$ and the initial channel statistics. 
The variable nodes are first initialized to the channel statistics  $L_{\mathrm{ch},j}$. 
The outgoing message $L_{c_i \rightarrow v_j}$ from \ac{CN} $c_i$ to connected \ac{VN} $v_j$ is given by
\begin{equation}\label{eq:cn_update}
  L_{c_i \rightarrow v_j} = \alpha (-1)^{s_i}2\cdot \tanh^{-1}\left( \prod_{j' \in \mathcal{N}(c_i)\setminus\{j\}} \tanh \left(\frac{ L_{v_{j'}\rightarrow c_i}}{2} \right) \right)
\end{equation}
where $\mathcal{N}(\cdot)$ denotes the set of connected nodes, $L_{v_{j'}\rightarrow c_i}$ the incoming message from the \ac{VN} $v_{j'}$ to \ac{CN} $c_i$, $s_i$ the syndrome bit at position $i$ and $\alpha$ a scalar scaling factor. 
The outgoing message $ L_{v_j \rightarrow c_i} $ from variable node $v_j$ to \ac{CN} $c_i$ is determined using
 \begin{equation}\label{eq:vn_update}
  L_{v_j \rightarrow c_i} = L_{\mathrm{ch},j} + \sum_{i' \in \mathcal{N}(v_j)\setminus\{i\}}L_{c_{i'}\rightarrow v_j}.
\end{equation}
The \emph{a posteriori} \ac{LLR} for $v_j$ is the sum of all incoming messages plus its initial value:
\begin{equation} \label{eq:final}
   L_{v_j} = L_{\mathrm{ch},j} + \sum_{i \in \mathcal{N}(v_j)}L_{c_{i}\rightarrow v_j}.
\end{equation}
Finally, the estimate $\hat{e}_j$ for the $j$-th fault location is computed by taking a hard decision on $ L_{v_j}$.
Syndrome-based \ac{BP} decoders were shown to share failure modes with 
\ac{LLR}-based \ac{BP} decoders, denoted as \emph{classical-type} trapping 
sets in \cite{raveendran2021trapping}. 
Often, these trapping sets consist 
of cycles in the decoding graph, which cause the messages exchanged during 
decoding to become highly correlated and frequently result in decoding 
failure. 
In particular, $4$-cycles (arising when two rows of the code's 
parity-check matrix overlap in two or more positions) are among the most 
harmful short cycles. 
Furthermore, syndrome-based decoders also suffer 
from degeneracy-induced symmetries (where $\mathbf{e}_1 \neq \mathbf{e}_2$ 
but $\mathbf{H}\mathbf{e}_1 = \mathbf{H}\mathbf{e}_2 = \mathbf{s}$), which 
often lead to convergence failures (\emph{quantum-type} trapping sets).
\section{Proposed Method}

In this section, we propose a method to apply the technique of using
\ac{AVN} and \ac{ACN} to the syndrome-based \ac{BP} decoding problem.

Consider a parity check matrix $\mathbf{H} \in \mathbb{F}_2^{m \times n}$ of a
binary linear code with $\mathrm{rank}(\mathbf{H})=m$\footnote{In practice,
the matrices of interest such as $\mathbf{H}_{\mathrm{DEM}}$ typically
contain redundant rows, i.e., their rank is smaller than their number of
rows. All statements and proofs in this section carry over with minor modifications; we assume full row rank only to simplify the exposition.} together with a given syndrome vector
$\mathbf{s} \in \mathbb{F}_2^{m}$.
Let
\begin{equation}\label{eq:system1}
 \mathcal{E}(\mathbf{H},\mathbf{s})\coloneqq\{\mathbf{e}\in\mathbb{F}_2^n:\mathbf{H}\mathbf{e}=\mathbf{s}\}
\end{equation}
be the solution space of all $\mathbf{e} \in \mathbb{F}_2^n$ fulfilling
$\mathbf{H} \mathbf{e} = \mathbf{s}$.
Suppose we augment the binary linear system of equations in the following way:
\begin{enumerate}[label=(\arabic*)]
\item \ac{AVN} insertion: Append an all-zero column to the right of $\mathbf{H}$.
\item \ac{ACN} insertion: Append a new row
$\mathbf{c}_{\mathrm{aug}}\tp = (\mathbf{c'}\tp,1) \in \mathbb{F}_2^{1 \times (n+1)}$
after the last row of $\mathbf{H}$. $\mathbf{c'}$ can be chosen arbitrarily
but is fixed thereafter.\footnote{Note specifically that
$\mathbf{c}'^\top$ can be linearly independent of the rows of $\mathbf{H}$.}
\item Transform $\mathbf{s}$ to $\mathbf{s}_{\mathrm{aug}} \in \mathbb{F}_2^{m+1}$
by appending an arbitrary but fixed bit $s'$.
\end{enumerate}
The results of this process are shown in Eq.~\ref{eq:aug} and
Eq.~\ref{eq:system2}.
\begin{equation}\label{eq:aug}
\mathbf{H}_{\mathrm{aug}}=
\raisebox{0.9ex}{$m+1$}\!\Bigg\{
\overbrace{
\begin{bmatrix}
\mathbf{H} & \mathbf{0}\\
\mathbf{c'}\tp & 1
\end{bmatrix}
}^{n+1}\qquad
\mathbf{s}_{\mathrm{aug}} =
\begin{bmatrix}
\mathbf{s}\\
s'
\end{bmatrix}
\end{equation}
\begin{equation}\label{eq:system2}
 \mathcal{E}(\mathbf{H}_{\mathrm{aug}},\mathbf{s}_{\mathrm{aug}})\coloneqq\{\mathbf{e}_{\mathrm{aug}}\in\mathbb{F}_2^{n+1}:\mathbf{H}_{\mathrm{aug}}\mathbf{e}_{\mathrm{aug}}=\mathbf{s}_{\mathrm{aug}}\}
\end{equation}
Here, $\mathbf{e}_{\mathrm{aug}}=(\mathbf{e},e')$ denotes $\mathbf{e}$
extended by a new bit $e'$.
\begin{lemma}[Augmentation]\label{lem:aug}
For $\mathbf{H}_{\mathrm{aug}},\mathbf{s}_{\mathrm{aug}}$ as in
Eq.~\ref{eq:aug}, the projection
$\pi:\mathcal{E}(\mathbf{H}_{\mathrm{aug}},\mathbf{s}_{\mathrm{aug}})
\to\mathcal{E}(\mathbf{H},\mathbf{s})$,
$(\mathbf{e},e') \mapsto \mathbf{e}$, is a bijection with inverse
$\mathbf{e}\mapsto(\mathbf{e},\,s'+\mathbf{c}'^\top\mathbf{e})$.
\end{lemma}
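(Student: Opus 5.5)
The plan is to unpack the block structure of $\mathbf{H}_{\mathrm{aug}}$ in Eq.~\ref{eq:aug} and show that the membership condition for $\mathcal{E}(\mathbf{H}_{\mathrm{aug}},\mathbf{s}_{\mathrm{aug}})$ splits into two independent equations. For $\mathbf{e}_{\mathrm{aug}}=(\mathbf{e},e')$, block multiplication gives
\begin{equation*}
\mathbf{H}_{\mathrm{aug}}\mathbf{e}_{\mathrm{aug}}=\begin{bmatrix}\mathbf{H}\mathbf{e}+\mathbf{0}\,e'\\ \mathbf{c}'^\top\mathbf{e}+e'\end{bmatrix}.
\end{equation*}
Hence $(\mathbf{e},e')\in\mathcal{E}(\mathbf{H}_{\mathrm{aug}},\mathbf{s}_{\mathrm{aug}})$ holds exactly when both $\mathbf{H}\mathbf{e}=\mathbf{s}$ and $e'=s'+\mathbf{c}'^\top\mathbf{e}$ hold, using that over $GF(2)$ the relation $\mathbf{c}'^\top\mathbf{e}+e'=s'$ is equivalent to $e'=s'+\mathbf{c}'^\top\mathbf{e}$.

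Well-definedness of $\pi$ follows directly. If $(\mathbf{e},e')$ lies in the augmented solution set, then the first block row gives $\mathbf{H}\mathbf{e}=\mathbf{s}$, so $\pi(\mathbf{e},e')=\mathbf{e}\in\mathcal{E}(\mathbf{H},\mathbf{s})$. This is exactly where the all-zero column appended by the \ac{AVN} insertion matters: it keeps $e'$ out of the original equations.

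Next, define $\phi(\mathbf{e})=(\mathbf{e},s'+\mathbf{c}'^\top\mathbf{e})$ on $\mathcal{E}(\mathbf{H},\mathbf{s})$. By the equivalence above, $\phi(\mathbf{e})$ lies in the augmented solution set. The composition $\pi\circ\phi$ is the identity by construction. For $\phi\circ\pi$, the second block row forces any augmented solution to satisfy $e'=s'+\mathbf{c}'^\top\mathbf{e}$, so $\phi(\pi(\mathbf{e},e'))=(\mathbf{e},e')$. Since $\phi$ is a two-sided inverse of $\pi$, the map $\pi$ is a bijection with the stated inverse.

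No step is a genuine obstacle. The only point needing care is that the coefficient of $e'$ in the appended row is $1$, which makes $e'$ uniquely determined by $\mathbf{e}$. This holds for arbitrary $\mathbf{c}'$ and $s'$, and it uses neither the full-rank assumption on $\mathbf{H}$ nor linear independence of $\mathbf{c}'^\top$ from the rows of $\mathbf{H}$. The lemma therefore carries over verbatim to rank-deficient matrices such as $\mathbf{H}_{\mathrm{DEM}}$, and it can be applied iteratively for multiple \acp{AVN}/\acp{ACN}.
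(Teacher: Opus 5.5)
Your proof is correct and follows essentially the same route as the paper: split the augmented system into $\mathbf{H}\mathbf{e}=\mathbf{s}$ and $e'=s'+\mathbf{c}'^\top\mathbf{e}$, and note that the second equation determines $e'$ uniquely without constraining $\mathbf{e}$. You spell out the two-sided inverse explicitly, where the paper leaves it implicit. Your remark that the argument does not use the full-rank assumption is also accurate.
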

\begin{proof}
The augmented system consists of $\mathbf{H}\mathbf{e}=\mathbf{s}$ and
$e'=s'+\mathbf{c}'^\top\mathbf{e}$. The first equation is the original
system; the second determines $e'$ uniquely from $\mathbf{e}$ and imposes no
constraint on it.
\end{proof}
\emph{Remark:} Observe that merely appending a single row
$\mathbf{c}_{\mathrm{aug}}\tp$ together with an \ac{LLR}-$0$ prior \ac{AVN}
has no influence on \ac{BP} decoding using this augmented graph. As evident
from Eq.~\ref{eq:vn_update} and Eq.~\ref{eq:cn_update}, due to this
degree-$1$ \ac{AVN}, the corresponding \ac{ACN} will always send $L = 0$ to
all connected \acp{VN}. As we will show later, only by adding this \ac{ACN}
to other rows in $\mathbf{H}_{\mathrm{aug}}$ or setting a different prior
\ac{LLR} will this influence the decoding trajectory.

Any finite sequence of augmentation steps (1)--(3) interleaved with
elementary row additions, containing $n_{\mathrm{AVN}}$ augmentations in
total, transforms $\mathbf{H}\in\mathbb{F}_2^{m\times n}$ into a matrix of
the general block structure
\begin{equation}\label{eq:block_structure}
\mathbf{H}_{\mathrm{aug}}
=\bigl(\mathbf{H}_{\mathrm{sys}}\big|\mathbf{M}\bigr)
\in\mathbb{F}_2^{m_{\mathrm{aug}}\times(n+n_{\mathrm{AVN}})},
m_{\mathrm{aug}}=m+n_{\mathrm{AVN}}.
\end{equation}
where $\mathbf{H}_{\mathrm{sys}}\in\mathbb{F}_2^{m_{\mathrm{aug}}\times n}$
acts on the original error coordinates and
$\mathbf{M}\in\mathbb{F}_2^{m_{\mathrm{aug}}\times n_{\mathrm{AVN}}}$, the
\emph{auxiliary block}, contains the \ac{AVN} columns. Note that in general
$\mathrm{row}(\mathbf{H}_{\mathrm{sys}})\neq\mathrm{row}(\mathbf{H})$, as the
appended rows $\mathbf{c}'^\top$ may be linearly independent of the rows of
$\mathbf{H}$; in particular, the augmented syndrome can no longer be obtained
from $\mathbf{s}$ directly. The following lemma summarizes the structural
properties of Eq.~\ref{eq:block_structure} that the later syndrome mapping relies
on.
\begin{lemma}\label{lem:structure}
For $\mathbf{H}_{\mathrm{aug}}=(\mathbf{H}_{\mathrm{sys}}\,|\,\mathbf{M})$ as
in Eq.~\ref{eq:block_structure}, it holds that
\begin{enumerate}[label=(\roman*)]
\item $\mathrm{rank}(\mathbf{M})=n_{\mathrm{AVN}}$, i.e., the auxiliary block
      has full column rank,
\item $\mathrm{rank}(\mathbf{H}_{\mathrm{aug}})
      =\mathrm{rank}(\mathbf{H})+n_{\mathrm{AVN}}=m_{\mathrm{aug}}$,
\item The subspace of $\mathrm{row}(\mathbf{H}_{\mathrm{aug}})$ consisting of
      all vectors with $0$ in the last $n_{\mathrm{AVN}}$ positions equals
      $\mathrm{row}(\mathbf{H})\times\{\mathbf{0}\}$.
\end{enumerate}
\end{lemma}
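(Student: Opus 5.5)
I will argue by induction on the length of the sequence of operations, showing that the three properties (i)--(iii) are invariants. Elementary row additions are invertible row operations, so they preserve rank and row space. Each augmentation step (1)--(3) will be checked directly.

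The base case is the untouched matrix $\mathbf{H}$. Here $n_{\mathrm{AVN}}=0$, $\mathbf{M}$ is empty, $\mathrm{rank}(\mathbf{H})=m$ by assumption, and (iii) is trivial.

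For a row addition step, the new matrix is $\mathbf{E}\mathbf{H}_{\mathrm{aug}}$ with $\mathbf{E}$ invertible. Hence $\mathrm{row}(\mathbf{E}\mathbf{H}_{\mathrm{aug}})=\mathrm{row}(\mathbf{H}_{\mathrm{aug}})$, and the total rank is unchanged, which gives (ii) and (iii) immediately. For (i), the new auxiliary block is $\mathbf{E}\mathbf{M}$. Multiplying by an invertible matrix preserves column rank, so $\mathrm{rank}(\mathbf{E}\mathbf{M})=\mathrm{rank}(\mathbf{M})=n_{\mathrm{AVN}}$.

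For an augmentation step, the new matrix has the form
\[
\mathbf{H}'_{\mathrm{aug}}=\begin{bmatrix}\mathbf{H}_{\mathrm{sys}} & \mathbf{M} & \mathbf{0}\\ \mathbf{c}'^\top & \mathbf{m}'^\top & 1\end{bmatrix},
\]
where $(\mathbf{c}'^\top,\mathbf{m}'^\top)$ is the arbitrary part of the appended row. The new auxiliary block is $\mathbf{M}'=\begin{bmatrix}\mathbf{M}&\mathbf{0}\\ \mathbf{m}'^\top&1\end{bmatrix}$. It is block lower triangular with a $1$ in the corner, so $\mathrm{rank}(\mathbf{M}')=\mathrm{rank}(\mathbf{M})+1$, which gives (i).

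For (ii), the new row has a $1$ in a column where every old row is $0$, so it is independent of the old rows. Thus the rank rises by exactly one. Inductively it equals $\mathrm{rank}(\mathbf{H})+n_{\mathrm{AVN}}=m+n_{\mathrm{AVN}}=m_{\mathrm{aug}}$, i.e.\ the augmented matrix keeps full row rank.

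For (iii), take a vector in $\mathrm{row}(\mathbf{H}'_{\mathrm{aug}})$ whose last $n_{\mathrm{AVN}}+1$ entries vanish. Write it as $\mathbf{x}^\top\mathbf{H}'_{\mathrm{aug}}$ with $\mathbf{x}=(\mathbf{y},z)$. The entry in the newest column equals $z$, so $z=0$. The vector therefore lies in $\mathrm{row}(\mathbf{H}_{\mathrm{aug}})\times\{0\}$ and has zeros in the old auxiliary positions, so by the induction hypothesis it lies in $\mathrm{row}(\mathbf{H})\times\{\mathbf{0}\}$. The reverse inclusion holds because $\mathrm{row}(\mathbf{H})\times\{\mathbf{0}\}$ is already contained in $\mathrm{row}(\mathbf{H}_{\mathrm{aug}})\times\{0\}$ by hypothesis.

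The main obstacle I expect is the bookkeeping in (iii). The inductive hypothesis must be phrased for the current matrix, as ``the zero-tail subspace of the row space equals $\mathrm{row}(\mathbf{H})\times\{\mathbf{0}\}$'', so that it survives both kinds of step. One must also note that row additions can mix the new row into earlier rows; this is harmless because (iii) concerns only the row space, which row additions leave invariant. If $\mathbf{H}$ lacks full row rank, the same argument goes through with $\mathrm{rank}(\mathbf{H})$ in place of $m$, as the paper's footnote allows.
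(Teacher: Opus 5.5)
Your proof is correct and follows the paper's argument essentially step for step. Both proceed by induction over the operation sequence, treat row additions as invertible left-multiplications that preserve the rank of $\mathbf{M}$, the rank of $\mathbf{H}_{\mathrm{aug}}$ and the row space, use the block-triangular growth of $\mathbf{M}$ under augmentation, and obtain (iii) from the fact that the appended row is the only row nonzero in the new column, so its coefficient must vanish; your explicit reverse inclusion in (iii) and remark on the rank-deficient case are small additions to the paper's terser version.
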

\begin{proof}
See Appendix~\ref{app:proof_structure}.
\end{proof}
With this and Alg. ~\ref{alg:avn}, a back transformation can be defined:
\begin{proposition}[Back Transformation]\label{prop:correctness}
Let $\mathbf{H}_{\mathrm{aug}}$ be as in Eq.~\ref{eq:block_structure} and
$\mathbf{H}_{\mathrm{map}}$ the output of Alg.~\ref{alg:avn}. Then (i)
$\mathrm{row}(\mathbf{H}_{\mathrm{map}})=\mathrm{row}(\mathbf{H})$, so
$\mathbf{H}_{\mathrm{map}}=\mathbf{A}\mathbf{H}$ for some
$\mathbf{A}\in\mathbb{F}_2^{m_{\mathrm{aug}}\times m}$ with zero rows at the
pivot positions, and (ii) for
$\mathbf{s}_{\mathrm{aug}}=\mathbf{A}\mathbf{s}$,
\[
\pi:\ \mathcal{E}(\mathbf{H}_{\mathrm{aug}},\mathbf{s}_{\mathrm{aug}})
\ \to\ \mathcal{E}(\mathbf{H},\mathbf{s}),
\qquad (\mathbf{e},\mathbf{e}')\mapsto\mathbf{e},
\]
is a bijection.
\end{proposition}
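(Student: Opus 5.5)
We do not see Alg.~\ref{alg:avn}, so we assume it does the natural thing: Gaussian elimination on the auxiliary block $\mathbf{M}$ to put it in reduced column-echelon form. By Lemma~\ref{lem:structure}(i), $\mathbf{M}$ has full column rank $n_{\mathrm{AVN}}$. Hence there is an invertible $\mathbf{T}\in\mathbb{F}_2^{m_{\mathrm{aug}}\times m_{\mathrm{aug}}}$, a product of elementary row additions and swaps recorded by the algorithm, such that
\[
\mathbf{T}\mathbf{H}_{\mathrm{aug}}=\begin{bmatrix}\mathbf{H}_1 & \mathbf{I}\\ \mathbf{H}_2 & \mathbf{0}\end{bmatrix}.
\]
The identity occupies the $n_{\mathrm{AVN}}$ pivot rows. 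We take $\mathbf{H}_{\mathrm{map}}$ to be the matrix whose pivot rows are zero and whose non-pivot rows are the rows of $\mathbf{H}_2$, in their original positions.

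For (i), the plan is as follows. The rows of $(\mathbf{H}_2\,|\,\mathbf{0})$ lie in $\mathrm{row}(\mathbf{H}_{\mathrm{aug}})$ and vanish on the auxiliary coordinates. By Lemma~\ref{lem:structure}(iii), $\mathrm{row}(\mathbf{H}_2)\subseteq\mathrm{row}(\mathbf{H})$. For the reverse inclusion, take any $\mathbf{v}\in\mathrm{row}(\mathbf{H})$. Again by (iii), $(\mathbf{v},\mathbf{0})$ is a combination of the rows of $\mathbf{T}\mathbf{H}_{\mathrm{aug}}$. Such a combination is zero on the auxiliary columns only if no pivot row is used, because of the identity block. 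So $\mathbf{v}\in\mathrm{row}(\mathbf{H}_2)$. This gives $\mathrm{row}(\mathbf{H}_{\mathrm{map}})=\mathrm{row}(\mathbf{H})$. Since each row of $\mathbf{H}_{\mathrm{map}}$ lies in $\mathrm{row}(\mathbf{H})$, we can write it as $\mathbf{a}_i^\top\mathbf{H}$; stacking these gives $\mathbf{A}$, with $\mathbf{a}_i=\mathbf{0}$ at the pivot rows. The rank count in Lemma~\ref{lem:structure}(ii) gives $\mathrm{rank}(\mathbf{H}_2)=m_{\mathrm{aug}}-n_{\mathrm{AVN}}=m$, so $\mathbf{A}$ restricted to the non-pivot rows is even invertible, though (i) does not need this.

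For (ii), we first fix what "$\mathbf{s}_{\mathrm{aug}}=\mathbf{A}\mathbf{s}$" means. The decoder works on $\mathbf{H}_{\mathrm{aug}}$, so the augmented syndrome must satisfy $\mathbf{T}\mathbf{s}_{\mathrm{aug}}=\mathbf{A}\mathbf{s}$ up to the values at the pivot positions. The pivot-position values are free, since the AVN bits absorb them; the algorithm sets them to $0$, matching the zero rows of $\mathbf{A}$. Using the transformed system $\mathbf{T}\mathbf{H}_{\mathrm{aug}}\mathbf{e}_{\mathrm{aug}}=\mathbf{A}\mathbf{s}$, which has the same solution set because $\mathbf{T}$ is invertible, the system splits into two blocks:
\[
\mathbf{H}_2\mathbf{e}=\mathbf{A}_2\mathbf{s},\qquad \mathbf{e}'=\mathbf{H}_1\mathbf{e}.
\]
The second block determines $\mathbf{e}'$ uniquely from $\mathbf{e}$, exactly as in Lemma~\ref{lem:aug}. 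So $\pi$ is injective, and its image is $\{\mathbf{e}:\mathbf{A}_2\mathbf{H}\mathbf{e}=\mathbf{A}_2\mathbf{s}\}$. Because $\mathbf{A}_2$ is invertible (full rank $m$, square $m\times m$ under the full-row-rank assumption), this image equals $\mathcal{E}(\mathbf{H},\mathbf{s})$. That completes the bijection.

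The main obstacle is the bookkeeping around $\mathbf{T}$: making sure that "$\mathbf{s}_{\mathrm{aug}}=\mathbf{A}\mathbf{s}$" refers to the syndrome of the form the decoder uses, and that the pivot-row zeros are consistent with it. If the algorithm instead pulls $\mathbf{A}$ back through $\mathbf{T}^{-1}$, the same block argument applies after that substitution. The injectivity of $\mathbf{A}_2$ is the key point for surjectivity onto $\mathcal{E}(\mathbf{H},\mathbf{s})$, and it rests on Lemma~\ref{lem:structure}(ii) and (iii).
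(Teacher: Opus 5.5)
Your part (i) is correct. Your span argument (a combination of rows of $\mathbf{T}\mathbf{H}_{\mathrm{aug}}$ that vanishes on the auxiliary columns cannot involve a pivot row) is a clean substitute for the paper's rank count. The block splitting in (ii), with the explicit invertibility of $\mathbf{A}_2$, is also what the paper does.

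The gap is the syndrome. Part (ii) claims the bijection for $\mathbf{s}_{\mathrm{aug}}=\mathbf{A}\mathbf{s}$ fed \emph{directly} to $\mathbf{H}_{\mathrm{aug}}$. This is the whole point of the pipeline, since BP only ever sees $\mathbf{H}_{\mathrm{aug}}$. You instead redefine $\mathbf{s}_{\mathrm{aug}}$ by $\mathbf{T}\mathbf{s}_{\mathrm{aug}}=\mathbf{A}\mathbf{s}$, so you prove the claim for $\mathbf{T}^{-1}\mathbf{A}\mathbf{s}$, and you leave the reconciliation open as ``the main obstacle''. Writing out the non-pivot equations of $\mathbf{T}\mathbf{H}_{\mathrm{aug}}\mathbf{e}_{\mathrm{aug}}=\mathbf{T}\mathbf{A}\mathbf{s}$ shows that the literal claim needs the non-pivot-to-non-pivot block of $\mathbf{T}$ to be the identity. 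With the row swaps you allow, this can fail. Take $m=1$, $\mathbf{H}=\mathbf{h}^\top$, $\mathbf{s}=s\in\mathbb{F}_2$ and a single augmentation, so $\mathbf{M}=(0,1)^\top$. If $\mathbf{T}$ swaps the two rows, with the pivot position taken in $\mathbf{T}\mathbf{H}_{\mathrm{aug}}$ as your identification $\mathbf{T}\mathbf{s}_{\mathrm{aug}}=\mathbf{A}\mathbf{s}$ presupposes, then $\mathbf{A}=(0,1)^\top$. Hence $\mathbf{T}^{-1}\mathbf{A}s=(s,0)^\top\neq\mathbf{A}s$, and $\mathbf{H}_{\mathrm{aug}}\mathbf{e}_{\mathrm{aug}}=\mathbf{A}s$ forces $\mathbf{h}^\top\mathbf{e}=0$ instead of $\mathbf{h}^\top\mathbf{e}=s$.

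The missing idea is exactly the step the paper's proof of (ii) rests on. Alg.~\ref{alg:avn} uses no swaps. Its $\mathbf{E}$ only adds pivot rows to non-pivot rows, which is possible because the rows of $\mathbf{M}$ indexed by $\mathcal{P}$ form an invertible block and hence span every other row of $\mathbf{M}$. Such an $\mathbf{E}$ fixes every vector that vanishes on $\mathcal{P}$. So $\mathbf{E}\mathbf{A}\mathbf{s}=\mathbf{A}\mathbf{s}$, and therefore $\mathbf{H}_{\mathrm{aug}}\mathbf{e}_{\mathrm{aug}}=\mathbf{A}\mathbf{s}\iff\mathbf{E}\mathbf{H}_{\mathrm{aug}}\mathbf{e}_{\mathrm{aug}}=\mathbf{A}\mathbf{s}$. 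Your block argument then goes through verbatim. The only change is that the pivot block of $\mathbf{E}\mathbf{M}$ is merely invertible rather than the identity, which still determines $\mathbf{e}'$ uniquely from $\mathbf{e}$. Any swap-free elimination to your reduced form would also work, since it never adds a non-pivot row anywhere.

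Your closing remark about pulling $\mathbf{A}$ back through $\mathbf{T}^{-1}$ points the wrong way. The statement fixes $\mathbf{s}_{\mathrm{aug}}=\mathbf{A}\mathbf{s}$, and what has to be shown is that no pull-back is needed.
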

\begin{proof}
See Appendix~\ref{app:proof_back_transformation}.
\end{proof}
\begin{algorithm}[t]
\caption{Construction of $\mathbf{H}_{\mathrm{map}}$}
\label{alg:avn}
\renewcommand{\algorithmicrequire}{\textbf{Input:}}
\renewcommand{\algorithmicensure}{\textbf{Output:}}
\begin{algorithmic}[1]
\REQUIRE $\mathbf{H}_{\mathrm{aug}}=(\mathbf{H}_{\mathrm{sys}}\,|\,\mathbf{M})$
  with $\mathrm{rank}(\mathbf{M})=n_{\mathrm{AVN}}$ as Eq.~\ref{eq:block_structure}
\ENSURE $\mathbf{H}_{\mathrm{map}}\in\mathbb{F}_2^{m_{\mathrm{aug}}\times n}$
  with $\mathrm{row}(\mathbf{H}_{\mathrm{map}})=\mathrm{row}(\mathbf{H})$
\STATE By Gaussian elimination on $\mathbf{M}$, find a pivot row set
  $\mathcal{P}$ with $|\mathcal{P}|=n_{\mathrm{AVN}}$ and an invertible
  $\mathbf{E}$ adding pivot rows to non-pivot rows such that
  $(\mathbf{E}\mathbf{M})$ is zero on all rows outside $\mathcal{P}$
\STATE $\mathbf{H}_{\mathrm{map}}\gets\mathbf{E}\mathbf{H}_{\mathrm{sys}}$
  with the rows in $\mathcal{P}$ set to $\mathbf{0}$
\RETURN $\mathbf{H}_{\mathrm{map}}$
\end{algorithmic}
\end{algorithm}
The conventional syndrome-based \ac{BP} decoding framework can then be
readily used in the following way:
\begin{itemize}
    \item Given the syndrome $\mathbf{s}$, compute
    $\mathbf{s}_{\text{aug}} =\mathbf{A}\mathbf{s}$.
    \item Perform \ac{BP}-based decoding using $\mathbf{H}_{\mathrm{aug}}$
    and $\mathbf{s}_{\text{aug}}$, yielding $\hat{\mathbf{e}}_{\mathrm{aug}}$.
    \item Extract the estimate of the original error as
    $\hat{\mathbf{e}} = \hat{\mathbf{e}}_{\mathrm{aug}}[:n]$.
\end{itemize}
In the remainder of this work, we apply this framework to the circuit-level
decoding problem of Sec.~{\ref{sec:system_model_section}}, i.e., with $\mathbf{H}=\mathbf{H}_{\mathrm{DEM}}$
and the measured syndrome $\mathbf{s}=\mathbf{s}_{\mathrm{DEM}}$; the estimate
$\hat{\mathbf{e}}_{\mathrm{DEM}}$ is obtained analogously. The complete
decoding architecture for the proposed method is shown in
Fig.~\ref{fig:system_model}. In the following sections, we first show how the
removal of short cycles from a given parity check matrix can be formulated in
the proposed framework. After that, we interpret the syndrome-splitting
subcode approach in our framework. Finally, we propose an ensemble-split
method that combines the advantages of both approaches.
\section{Short-cycle Removal}
\subsection{Direct graph modification}
The \ac{GARI} method proposed in \cite{gari} introduces auxiliary nodes
to eliminate the $4$-cycles that the $Y$-component of the decoding
matrices induces in the \ac{DEM}. Although the underlying graph
transformation is stated for general decoding graphs, its application
in \cite{gari} is restricted to the $4$-cycles arising from $Y$-type
error mechanisms. Each such augmentation is an instance of
steps (1)--(3) of our framework: the vector $\mathbf{c}'$ is supported
on the columns of the targeted $Y$-induced biclique, and the
corresponding row additions follow. The resulting augmentation thus
takes the form of Eq.~\ref{eq:block_structure}, and
Proposition~\ref{prop:correctness} recovers the decoding-problem
equivalence established in \cite{gari}. Although \cite{gari} remarks
that the transformation can be applied recursively to the remaining
$4$-cycles, this option is not explored in its numerical evaluation.
In contrast, we apply the optimized \ac{AVN}/\ac{ACN}-based
$4$-cycle-removal strategy of \cite{yifei_1} to the $X$-type decoding
matrix.
The algorithm of \cite{yifei_1} greedily selects all-ones
blocks containing $\delta$ $4$-cycles each and removes them either by a plain
row addition, if this is possible without creating degree-$1$ or degree-$2$
check nodes (Cases~1 and~2 of \cite{yifei_1}), or by an \ac{AVN}/\ac{ACN}
insertion.
\begin{corollary}\label{cor:cycle_removal}
Alg.~\ref{alg:cycle_remove} performs only elementary row additions
(Cases~1 and~2, and adding the appended row to the rows in $\mathcal{S}_c$)
and augmentation steps with $\mathbf{c}'$ supported on $\mathcal{S}_v$. Its
output is therefore of the form of Eq.~\ref{eq:block_structure},
Lemma~\ref{lem:structure} applies, and the decoding pipeline via
Proposition~\ref{prop:correctness} can be used.
\end{corollary}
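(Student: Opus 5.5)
The proof is an induction over the sequence of operations that Alg.~\ref{alg:cycle_remove} performs. The only ingredient is the closure statement that precedes Eq.~\ref{eq:block_structure}: any finite sequence of augmentation steps (1)--(3), interleaved with elementary row additions, yields a matrix of block form $(\mathbf{H}_{\mathrm{sys}}\,|\,\mathbf{M})$. So the work reduces to showing that every primitive action of the algorithm is one of these two admissible moves. The conclusions about Lemma~\ref{lem:structure} and Proposition~\ref{prop:correctness} then follow directly, because both are stated for arbitrary matrices of that form.

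I would start by writing the algorithm's run as a finite word of operations and classifying each letter. In Cases~1 and~2 of \cite{yifei_1}, a $4$-cycle-carrying all-ones block is removed by adding one row of the current matrix to another. This is an elementary row addition. In the third case, three things happen.
\begin{itemize}
\item An all-zero column is appended, which is step~(1).
\item A row $(\mathbbm{1}_{\mathcal{S}_v}^\top,\mathbf{0}^\top,1)$ is appended. This is step~(2) with $\mathbf{c}'=\mathbbm{1}_{\mathcal{S}_v}$, padded by zeros on the previously added AVN coordinates, so $\mathbf{c}'$ is supported on $\mathcal{S}_v$.
\item The new row is added to every row in $\mathcal{S}_c$. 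These are again elementary row additions.
\end{itemize}
Step~(3) has no matrix content: it is the bookkeeping bit $s'$, which we need not fix explicitly, since Proposition~\ref{prop:correctness} later supplies $\mathbf{s}_{\mathrm{aug}}=\mathbf{A}\mathbf{s}$.

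The induction hypothesis is that after $t$ operations with $k$ insertions, the matrix is $(\mathbf{H}^{(t)}_{\mathrm{sys}}\,|\,\mathbf{M}^{(t)})$ with $m+k$ rows. It is also worth carrying the invariant that every row addition is between rows of the current full matrix (AVN columns included). Otherwise a row addition could silently alter the $\mathbf{M}$ block inconsistently. In fact it cannot, since additions act on entire rows.

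The main obstacle is keeping track of the current matrix rather than the original one. Steps~(1)--(3) were defined relative to some $\mathbf{H}$, but later insertions are applied to an already augmented matrix, whose previous AVN columns get a $0$ in the new row. I would handle this by reapplying the augmentation definition with $\mathbf{H}$ replaced by the current augmented matrix. The chained bijections of Lemma~\ref{lem:aug}, together with the invariance of the solution set under invertible row operations, show that nothing beyond the block form is needed.

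Finally, the total number of insertions is $n_{\mathrm{AVN}}$, giving $m_{\mathrm{aug}}=m+n_{\mathrm{AVN}}$. Lemma~\ref{lem:structure} and Proposition~\ref{prop:correctness} then apply verbatim. The full-row-rank caveat is handled as in the paper's footnote.
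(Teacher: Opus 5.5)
Your proposal is correct and follows the same reasoning the paper relies on: every action of Alg.~\ref{alg:cycle_remove} is either an elementary row addition or an augmentation step (1)--(2), with the syndrome bit of step (3) supplied later by $\mathbf{s}_{\mathrm{aug}}=\mathbf{A}\mathbf{s}$, so the closure remark preceding Eq.~\ref{eq:block_structure} gives the block form, and Lemma~\ref{lem:structure} and Proposition~\ref{prop:correctness} then apply verbatim. One small imprecision is harmless: the algorithm operates on the current augmented matrix, and earlier AVN columns can lie on later $4$-cycles, so $\mathcal{S}_v$ may contain them and $\mathbf{c}'=\mathbbm{1}_{\mathcal{S}_v}$ need not be zero there as you claim---but step (2) allows arbitrary $\mathbf{c}'$, and the proof of Lemma~\ref{lem:structure} explicitly allows a nonzero part $\mathbf{v}^\top$ beneath the old $\mathbf{M}$.
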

\begin{algorithm}[t]
\caption{Optimized $4$-cycle removal, adapted from \cite{yifei_1}}
\label{alg:cycle_remove}
\renewcommand{\algorithmicrequire}{\textbf{Input:}}
\renewcommand{\algorithmicensure}{\textbf{Output:}}
\begin{algorithmic}[1]
\REQUIRE $\mathbf{H}\in\mathbb{F}_2^{m\times n}$
\ENSURE $4$-cycle free $\mathbf{H}'$ of the form of
  Eq.~\ref{eq:block_structure}
\WHILE{$\mathbf{H}$ contains an all-ones block
  $(\mathcal{S}_c,\mathcal{S}_v)$ with
  $|\mathcal{S}_c|,|\mathcal{S}_v|\ge 2$}
\STATE Among all such blocks, select one maximizing
  $\delta=\binom{|\mathcal{S}_c|}{2}\binom{|\mathcal{S}_v|}{2}$
\IF{$\exists\, i_0\in\mathcal{S}_c:\;
  \mathcal{S}_v\subseteq\mathrm{supp}(\mathbf{h}_{i_0}\tp)$ and
  $|\mathrm{supp}(\mathbf{h}_{i_0}\tp)\setminus\mathcal{S}_v|\le 1$}
\STATE Add row $i_0$ to all other rows in $\mathcal{S}_c$
  \hfill (Cases 1 and 2)
\ELSE
\STATE Perform augmentation steps (1)--(2) with
  $\mathbf{c}'=\mathbbm{1}_{\mathcal{S}_v}$; add the new row to all rows in
  $\mathcal{S}_c$
\ENDIF
\ENDWHILE
\RETURN $\mathbf{H}'\gets\mathbf{H}$
\end{algorithmic}
\end{algorithm}

The following example aims to illustrate the workings of Alg.
\ref{alg:cycle_remove} and Alg. \ref{alg:avn}.

\emph{Example}: Consider the following $3 \times 8$ binary parity check matrix $\mathbf{H}$:
\[
\mathbf{H} =
\begin{tikzpicture}[baseline=(current bounding box.center)]
\matrix (m) [matrix of math nodes,nodes in empty cells,
    left delimiter={[},right delimiter={]}]
  {
    0 & 0 & 0 & 1 & 1 & 0 & 0 & 1 \\
    1 & 0 & 1 & 0 & 0 & 1 & 1 & 0 \\
    0 & 1 & 1 & 0 & 1 & 1 & 0 & 0 \\
  };
 
  \draw[rot,thick,rounded corners] 
    (m-2-3.north west) 
    rectangle 
    (m-3-6.south east);
\end{tikzpicture}
\]
 Observe that the matrix contains a single $4$-cycle spanned by the row-column pairs $(2,3),(3,3),(2,6),(3,6)$, for clarity also marked with a red rectangle.
 We now use Alg. \ref{alg:cycle_remove} to remove this cycle.
 The column indices participating in this cycle are $S_v = \{ 3,6 \}$, the row indices $S_c = \{2,3 \}$ and the number of $4$-cycles inside this block is $\delta = \binom{2}{2} \binom{2}{2} = 1$.
 As "Case 1" and "Case 2" of Alg. \ref{alg:cycle_remove} are not fulfilled, we proceed to remove this $4$-cycle by introducing an \ac{AVN} and \ac{ACN}.
First, we extend $\mathbf{H}$ with an all zero column to the right. Then, we append a new row $\mathbf{c}_{\mathrm{aug}}\tp$ to $\mathbf{H}$ which has a $1$ in the columns participating in the $4-$cycle and a $1$ in the newly created all-zero column. Finally, we add the newly created row to the rows participating in the $4-$cycle we want to remove; in our example these are rows $2$ and $3$. The result of this process is the $4 \times 9$ matrix $\mathbf{H}_{\mathrm{aug}}$:
\[
\begin{tikzpicture}[baseline=(m.center)]
\matrix (m) [
    matrix of math nodes,
    nodes in empty cells,
    nodes={minimum width=4mm, minimum height=4mm},
    left delimiter={[},
    right delimiter={]}
]{
  0 & 0 & 0 & 1 & 1 & 0 & 0 & 1 & |[text=mittelblau]|0 \\
  1 & 0 & 0 & 0 & 0 & 0 & 1 & 0 & |[text=mittelblau]|1 \\
  0 & 1 & 0 & 0 & 1 & 0 & 0 & 0 & |[text=mittelblau]|1 \\
  |[text=apfelgruen]|0 & |[text=apfelgruen]|0 &
  |[text=apfelgruen]|1 & |[text=apfelgruen]|0 &
  |[text=apfelgruen]|0 & |[text=apfelgruen]|1 &
  |[text=apfelgruen]|0 & |[text=apfelgruen]|0 &
  |[text=apfelgruen,text=mittelblau]|1 \\
};

\node[left=30mm of m.center] {$\mathbf{H}_{\mathrm{aug}}=$};

\node[text=apfelgruen,left=6mm of m-4-1] {\textbf{ACN}};
\node[text=mittelblau,above=4mm of m-1-9] {\textbf{AVN}};

\draw[decorate,decoration={brace,mirror,amplitude=4pt},thick]
  ([yshift=-2mm]m-4-1.south west) -- ([yshift=-2mm]m-4-8.south east)
  node[midway,below=5pt] {$\mathbf{H}_{\mathrm{sys}}$};
\draw[decorate,decoration={brace,mirror,amplitude=4pt},thick]
  ([yshift=-2mm]m-4-9.south west) -- ([yshift=-2mm]m-4-9.south east)
  node[midway,below=5pt] {$\mathbf{M}$};
  \draw[dashed,gray,thick]
  ([xshift=0.5mm]m-1-8.north east) -- ([xshift=0.5mm]m-4-8.south east);
\end{tikzpicture}
\]
Observe that indeed $\mathrm{row}(\mathbf{H}) \neq \mathrm{row}(\mathbf{H}_{\mathrm{aug}})$.
We now want to use Alg. \ref{alg:avn} to "invert" this mapping: As we only have $n_{\mathrm{AVN}} = 1$, $\mathbf{M}$ consists of single column. Say, we choose row $2$ as pivot row for the Gaussian elimination. We now add row $2$ onto rows $3$ and $4$ to make the last column all zero. Then we set row $2$ to zero and remove the last column, resulting in $\mathbf{H}_{\mathrm{map}}$:
\[
\begin{tikzpicture}[baseline=(m.center)]

\matrix (m) [
    matrix of math nodes,
    nodes in empty cells,
    nodes={minimum width=4mm, minimum height=4mm},
    left delimiter={[},
    right delimiter={]}
]{
  0 & 0 & 0 & 1 & 1 & 0 & 0 & 1 \\
  0 & 0 & 0 & 0 & 0 & 0 & 0 & 0 \\
  1 & 1 & 0 & 0 & 1 & 0 & 1 & 0 \\
  1 & 0 & 1 & 0 & 0 & 1 & 1 & 0 \\
};

\node[left=28mm of m.center] {$\mathbf{H}_{\mathrm{map}}=$};

\end{tikzpicture}
\]
$\mathbf{H} \neq  \mathbf{H}_{\mathrm{map}}$ but $\mathrm{row}(\mathbf{H}) = \mathrm{row}(\mathbf{H}_{\mathrm{map}})$. Note that $\mathbf{H} =  \mathbf{H}_{\mathrm{map}}$ if we had chosen row $4$ as the pivot row.
Finally, we determine $\mathbf{A}$ such that
$\mathbf{H}_{\mathrm{map}} = \mathbf{A}\mathbf{H}$ and
$\mathbf{s}_{\mathrm{aug}} = \mathbf{A}\mathbf{s}$:
\[
\mathbf{A} =
\begin{bmatrix}
1 & 0 & 0 \\
0 & 0 & 0 \\
0 & 1 & 1 \\
0 & 1 & 0
\end{bmatrix}
,
\qquad
\mathbf{s}_{\mathrm{aug}} = \mathbf{A}\mathbf{s}
\]
Note that, in accordance with Proposition~\ref{prop:correctness},
$\mathbf{A}$ has a zero row at the pivot position, so
$\mathbf{s}_{\mathrm{aug}}$ is computable from the measured syndrome alone.
\begin{figure}[!t]
    \centering
\input{fig/fer_vs_4_cycles}
\caption{Effect of iterative $4$-cycle removal on $\mathbf{H}_{\text{DEM}}$ of the $[[72,12,6]]$ \ac{BB} code for $p=0.001$ and $r=6$ measurement rounds. The top two panels show structural quantities. The lower three panels show the logical error rate $p_L$ for different BP iteration numbers $I$ and factor node scalings $\alpha$.}
\label{fig:avn_cycle_fer}
\end{figure}

\subsection{Parity check matrix construction}
The $4$-cycle removal algorithm was recently used in the construction of parity check matrices optimized for \ac{BP} decoding of short classical channel codes in \cite{yifei_1,yifei_2}, demonstrating significant gains in performance for some codes.
For the quantum setting, \cite{KIT1} observed that adding redundant checks to the parity check matrix of \ac{CSS} codes in the code capacity model can improve performance of some codes while worsening that of others. 
In preliminary experiments on the codes considered here,
straightforward adaptations of these constructions to the \ac{DEM} parity
check matrix did not yield consistent improvements over the standard matrix
within our \ac{BP} configuration; a systematic study, including tuning of
the construction parameters, is left for future work.
\section{Affine subcode view}
Affine subcode ensemble decoding was introduced for short classical binary codes \cite{mandelbaum2026affinesubcodeensembledecoding} and recently extended to the quantum setting \cite{wursthorn2026affine}. 
The core idea is to append additional, linearly independent rows (\emph{splitters}) to the parity check matrix of a \ac{CSS} code in order to break degeneracy-induced symmetries that impair syndrome-based \ac{BP} decoding.
Since the syndrome bits of these added checks are not available before decoding, an ensemble of decoders is run, one for each hypothesized value.
This can be viewed as the check-splitting postprocessing of \cite{yin2024symbreak} moved \emph{before} \ac{BP}, with the added benefit that it can also improve the structure of the decoding graph itself.
This construction emerges naturally from our framework.
By Lemma~\ref{lem:aug}, fixing the \ac{AVN} to a hypothesized value $e'=b'$ restricts the original solution space $\mathcal{E}(\mathbf{H},\mathbf{s})$ to the affine constraint $\mathbf{c}'^\top\mathbf{e}=s'+b'$. 
The two hypotheses $b'\in\{0,1\}$ partition the solution space, so exactly one branch of the resulting two-decoder ensemble contains the true error.
Since elementary row additions leave the solution space unchanged, the same holds for the general construction of Eq.~\ref{eq:block_structure}. 
The following lemma shows which choices of $\mathbf{c}'$ provide new information for decoding:
\begin{lemma}[DEM splitter]\label{lem:split}
Let $\mathbf{e}$ denote the true physical error and let
$\mathbb{S}:=\ker\mathbf{H}_{\mathrm{DEM}}\cap\ker\mathbf{L}$.
We call the
cosets $\mathbf{e}+\mathbb{S}$ \emph{degeneracy sets}, in analogy
to~\cite{wursthorn2026affine}: adding an element of $\mathbb{S}$ to
$\mathbf{e}$ does not change the syndrome or the logical observables.
Fixing the \ac{AVN} to its true value $e'_{\mathrm{true}} = s'+\mathbf{c}'^\top\mathbf{e}$ imposes the affine constraint
$\mathbf{c}'^\top\hat{\mathbf{e}}=s'+e'_{\mathrm{true}} = \mathbf{c}'^\top\mathbf{e}$ on the candidate errors
$\hat{\mathbf{e}}\in\mathcal{E}(\mathbf{H}_{\mathrm{DEM}},\mathbf{s}_{\mathrm{DEM}})$.
This constraint
(i) provides information not contained in the measured syndrome iff
$\mathbf{c}'^\top\notin\mathrm{Row}(\mathbf{H}_{\mathrm{DEM}})$, and
(ii) splits every degeneracy set into two disjoint subsets iff
$\mathbf{c}'^\top\notin\mathrm{Row}(\mathbf{H}_{\mathrm{DEM}})+\mathrm{Row}(\mathbf{L})$.
\end{lemma}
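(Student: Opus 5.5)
Both parts reduce to the same linear-algebra fact over $\mathbb{F}_2$: for a subspace $\mathbb{V}\subseteq\mathbb{F}_2^N$, the functional $\mathbf{x}\mapsto\mathbf{c}'^\top\mathbf{x}$ vanishes identically on $\mathbb{V}$ iff $\mathbf{c}'^\top\in\mathbb{V}^\perp$. We also use the standard identities $(\ker\mathbf{H})^\perp=\mathrm{Row}(\mathbf{H})$ and $(\ker\mathbf{H}\cap\ker\mathbf{L})^\perp=\mathrm{Row}(\mathbf{H})+\mathrm{Row}(\mathbf{L})$. The latter holds because $\ker\mathbf{H}\cap\ker\mathbf{L}=\ker\begin{bmatrix}\mathbf{H}\\ \mathbf{L}\end{bmatrix}$, and a finite-dimensional subspace satisfies $\mathbb{V}^{\perp\perp}=\mathbb{V}$ over any field, including $\mathbb{F}_2$, despite self-orthogonal vectors. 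We write $\mathbf{H}=\mathbf{H}_{\mathrm{DEM}}$ throughout.

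For (i), first make precise "provides information not contained in the measured syndrome": the value $\mathbf{c}'^\top\hat{\mathbf{e}}$ is not constant over $\mathcal{E}(\mathbf{H},\mathbf{s}_{\mathrm{DEM}})$, so the constraint actually removes candidates. Every candidate can be written $\hat{\mathbf{e}}=\mathbf{e}+\mathbf{k}$ with $\mathbf{k}\in\ker\mathbf{H}$. Then $\mathbf{c}'^\top\hat{\mathbf{e}}=\mathbf{c}'^\top\mathbf{e}+\mathbf{c}'^\top\mathbf{k}$.
\begin{itemize}
\item If $\mathbf{c}'^\top\in\mathrm{Row}(\mathbf{H})$, write $\mathbf{c}'^\top=\mathbf{a}^\top\mathbf{H}$. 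Then $\mathbf{c}'^\top\mathbf{k}=0$ for all $\mathbf{k}\in\ker\mathbf{H}$. Hence $\mathbf{c}'^\top\hat{\mathbf{e}}=\mathbf{a}^\top\mathbf{s}_{\mathrm{DEM}}$ is already determined by the syndrome, and the constraint is vacuous.
\item Conversely, if $\mathbf{c}'^\top\notin\mathrm{Row}(\mathbf{H})=(\ker\mathbf{H})^\perp$, there is a $\mathbf{k}\in\ker\mathbf{H}$ with $\mathbf{c}'^\top\mathbf{k}=1$. Then $\mathbf{e}+\mathbf{k}$ is a syndrome-consistent candidate that violates the constraint.
\end{itemize}

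For (ii), argue identically with $\mathbb{S}$ in place of $\ker\mathbf{H}$. On a degeneracy set $\mathbf{x}+\mathbb{S}$, the functional takes both values iff there is an $\mathbf{s}\in\mathbb{S}$ with $\mathbf{c}'^\top\mathbf{s}=1$, i.e. iff $\mathbf{c}'^\top\notin\mathbb{S}^\perp=\mathrm{Row}(\mathbf{H})+\mathrm{Row}(\mathbf{L})$. In that case the coset splits into the two cosets of the index-2 subgroup $\mathbb{S}\cap\ker\mathbf{c}'^\top$, which are nonempty, disjoint and of equal size. This condition does not depend on the representative $\mathbf{x}$, so it gives "every degeneracy set" at once. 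If instead $\mathbf{c}'^\top\in\mathbb{S}^\perp$, the functional is constant on each coset and no split occurs.

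The only step needing care is justifying $(\ker\mathbf{H}\cap\ker\mathbf{L})^\perp=\mathrm{Row}(\mathbf{H})+\mathrm{Row}(\mathbf{L})$ over $\mathbb{F}_2$. I would argue it by dimension counting: $\dim\ker=N-\mathrm{rank}$ of the stacked matrix, which gives $\dim(\ker)^\perp=\mathrm{rank}$. Combined with the evident inclusion $\mathrm{Row}\subseteq(\ker)^\perp$, equal dimensions give equality. The rest is bookkeeping, together with the remark, via Lemma~\ref{lem:aug}, that fixing $e'=e'_{\mathrm{true}}$ is exactly the constraint $\mathbf{c}'^\top\hat{\mathbf{e}}=\mathbf{c}'^\top\mathbf{e}$.
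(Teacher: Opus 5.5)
Your proposal is correct and takes essentially the same approach as the paper: both parts reduce to the fact that $\mathbf{c}'^\top\hat{\mathbf{e}}$ is constant on a coset $\mathbf{e}_0+\mathbb{V}$ iff $\mathbf{c}'^\top\in\mathbb{V}^\perp$, applied with $\mathbb{V}=\ker\mathbf{H}_{\mathrm{DEM}}$ for (i) and $\mathbb{V}=\mathbb{S}$ for (ii), together with the corresponding duality identities. The dimension-counting justification of $(\ker\mathbf{H}_{\mathrm{DEM}}\cap\ker\mathbf{L})^\perp=\mathrm{Row}(\mathbf{H}_{\mathrm{DEM}})+\mathrm{Row}(\mathbf{L})$ and the equal-size splitting via the index-2 subgroup are extra detail that the paper leaves implicit.
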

\begin{proof}
See Appendix~\ref{app:proof_degeneracy}.
\end{proof}

As in \cite{wursthorn2026affine}, running an ensemble of two \ac{BP} decoders with the \ac{AVN} fixed to $0$ and $1$, respectively, recovers the subcode ensemble decoder.

\section{Augmented graph based ensemble decoder}
Each step $\mu$ ($0\leq\mu<P$) of the $4$-cycle removal process can be seen
as appending a new row $\mathbf{c}_{\mathrm{aug},\mu}^\top$ to the matrix
and then adding it onto existing rows.
By Lemma~\ref{lem:split}, the
appended row is a candidate splitter for subcode ensemble decoding
\cite{mandelbaum2026affinesubcodeensembledecoding} whenever its
$\mathbf{c}'^\top$ part lies outside
$\mathrm{Row}(\mathbf{H}_{\mathrm{DEM}})+\mathrm{Row}(\mathbf{L})$.
As simulations suggest, nearly all of these rows fulfill this criterion.
We therefore propose to reuse the rows generated by $4$-cycle removal as
splitter candidates for a subcode ensemble decoder.
Since this yields far more candidates than can be used ($P$), a selection rule is required; following
the observation in \cite{wursthorn2026affine}, we further restrict each
ensemble member to two decoding paths ($\Delta=2$).
We propose two greedy selection strategies:
\begin{itemize}
  \item \emph{Static}: The ensemble is built iteratively. The first member
  is the candidate with the lowest empirical logical error rate $p_L$.
  Each subsequent member is the candidate that corrects the largest number
  of sampled error events on which the current ensemble fails (the
  maximum-coverage heuristic of
  \cite{mandelbaum2025subcodeensembledecodinglinear}).
     \item \emph{Adaptive}: Standard \ac{BP} is run for $T$ iterations on the
  $4$-cycle-free matrix. If the decoder did not converge to a syndrome-valid solution, the \acp{AVN} are ranked by their \ac{LLR} magnitudes, and the ensemble is formed from the \acp{ACN}
  associated with the $L$ least reliable \acp{AVN}.
\end{itemize}
\label{sec:construction}
If multiple ensemble members produce syndrome-valid estimates, the minimum-Hamming-weight estimate is selected as the final correction.
In the following, $E$ denotes the ensemble size, i.e. the number of two decoding path ensemble members.
\section{Numerical results}
If not specified otherwise, we perform the experiments using $I=20$ iterations of syndrome-based \ac{BP} (sum-product, $\alpha=1$) decoding at an error rate $p = 0.001$ on the $[[72,12,6]]$ \ac{BB} \cite{Bravyi_2024} code. For each data point, we collect at least $200$ logical error events, defined as instances after which at least one logical observable is flipped after applying the decoded error. Dividing this by the total number of samples yields the \ac{LER}. 
Using this, we calculate the \emph{logical error rate per syndrome measurement round} \cite{Bravyi_2024} as,
\begin{equation}
    p_L = 1-(1-\ac{LER})^{\frac{1}{r}}
\end{equation}
Following common convention \cite{beni2025tesseractsearchbaseddecoderquantum}, if not mentioned otherwise, we assume $r = d$ rounds of measurements, where $d$ is the code minimum distance.
Our simulation framework is based on QUITS \cite{Kang2025quitsmodularqldpc} for the circuit-level error model and the BP-decoder implementation of \cite{roffe_decoding_2020}.
\subsection{$4-$cycle removal}
We first evaluate the effect of $4$-cycle removal on the number of edges in the decoding graph, the $4$-cycles and the error rate $p_L$ for various iteration counts $I$ and check-node scaling factors $\alpha$ during fixed steps of the removal process. 
To this end, we first run a complete $4$-cycle removal to determine the total number of removal steps. 
Then, we evenly divide this into $15$ intermediate checkpoints where we evaluate the previously mentioned quantities on the partially $4$-cycle reduced matrices. 
The results are shown in Fig. \ref{fig:avn_cycle_fer}.
As expected, the number of $4$-cycles steadily decreases during the removal process, with the first removal step removing the largest number of cycles.
This is also reflected in the edge count of the matrices: The minimum does not lie at zero $4$-cycles but arises shortly after the start of the removal process.
Note that removing single $4$-cycles with the \ac{ACN} and \ac{AVN} construction can indeed increase the number of edges in the matrix as $4$ edges are removed and $5$ edges introduced.
This reveals a tradeoff between decoding complexity (in terms of edge count) and decoding performance.
Most interesting is the $p_L$ behaviour over the removal process.
First, note that the performance of the $4$-cycle reduced augmented matrices generally improves for a  check-node scaling $\alpha=1$ and a reasonable number of BP iterations.
Remarkably, the final $4$-cycle free matrices exhibit a strong dependence on $\alpha$ in their performance.
For a small number of \ac{BP} iterations, the decoding performance of the augmented graph worsens as more \acp{AVN} are introduced. One possible explanation is that the zero-prior-LLR \acp{AVN} provide no intrinsic information while simultaneously enlarging the decoding graph.
Consequently, additional \ac{BP} iterations may be required for useful information to propagate through the augmented graph.
Furthermore, for a large iteration count, the original $\mathbf{H}_{\text{DEM}}$ without any removed $4$-cycles shows a performance nearly as good as the final $4$-cycle removed one.\footnote{We observed the same behaviour exaggerated in the code capacity model, where noise can only occur on the data qubits.}
\subsection{Ensemble Construction}
Next, we investigate the performance of the proposed ensemble based decoding method.
To this end, we first perform an ablation study where we construct static and adaptive ensembles for different parameters of $E$ and $T$.
The results are shown in Fig. \ref{fig:adaptive_fixed}.
\begin{figure}[!h]
    \centering
    \input{fig/fer_ensemble.tex}
    \caption{Error rate $p_L$ comparison between the adaptive and static ensemble decoders for $p = 0.001$, $r=6$, and $I=20$. The shaded surface represents the adaptive decoder performance for various combinations of $E$ and $T$. The red curve shows the performance of the static ensemble decoder over varying $E$.}
\label{fig:adaptive_fixed}
\end{figure}
As expected, both ensemble approaches achieve lower $p_L$ as the ensemble size increases. 
Surprisingly, the static ensemble outperforms the adaptive variant over a broad range of parameter settings and performs comparably for the remaining settings. 
Note that the current $\ac{LLR}$-magnitude-based \ac{AVN} ranking might not always identify the most effective splitter candidates. 
A more refined \ac{AVN} selection strategy may therefore further improve the performance of the adaptive decoder.
In a next step, we evaluate the proposed ensemble decoders against plain \ac{BP} and the \ac{BP}+\ac{OSD} \cite{Panteleev_2021} decoder. 
The results for two different \ac{BB} codes are shown in Fig. ~\ref{fig:avn_final_ensemble_sweep}.
\begin{figure}[t]
    \centering
    \subfloat[$\lbrack\lbrack 72,12,6\rbrack\rbrack$ BB code \label{fig:avn_sweep_code_a}]{%
        \input{fig/fer_vs_p_72_12}
    }

    \vspace{0.5em}

    \subfloat[$\lbrack\lbrack 90,8,10\rbrack\rbrack$ BB code \label{fig:avn_sweep_code_b}]{%
        \input{fig/fer_vs_p_90_8}
    }

    \caption{Logical error rate per round $p_L$ of the proposed ensembles together with different baselines over the physical error rate $p$. $\mathrm{BP}_{\text{adaptive}}$ denotes the adaptive ensemble and $\mathrm{BP}_{\text{static}}$ the static variant.
    All tested decoders use a total of $I=20$ \ac{BP} iterations.}
    \label{fig:avn_final_ensemble_sweep}
\end{figure}
Again, removing the $4$-cycles increases the \ac{BP} decoding performance for both codes.
For the smaller $[[72,12,6]]$ code, the adaptive ensemble with $24$ ensemble members nearly reaches the performance of the BP+OSD-$0$ baseline while permitting latency comparable to plain \ac{BP} through parallel execution.
For the larger code, however, a substantial performance gap remains even with $128$ ensemble members.
This suggests that further gains may require complementary techniques, such as layered or windowed decoding, which could reduce the effective decoding problem size and improve information propagation.
\section{Conclusion}
In this work, we introduced a general framework that incorporates \acp{AVN} and \acp{ACN} into the decoding graph for binary syndrome-based decoding of \ac{QLDPC} codes while preserving the original decoding problem. The framework provides a common algebraic interpretation of graph augmentation, $4$-cycle removal and affine subcode constraints. We showed that while $4$-cycle removal can improve the performance, its benefits strongly depend on the \ac{BP} iteration count and check-node message scaling.
By reusing the auxiliary nodes generated during $4$-cycle removal as candidate degeneracy splitters, the proposed static and adaptive subcode ensembles substantially improve plain BP decoding.
Improved decoding-matrix construction methods utilizing auxiliary nodes might further improve performance.
Combining these methods with windowed decoding could be particularly effective for larger codes, as it could reduce the effective decoding problem size while retaining the benefits of graph-derived subcode ensembles.
\label{sec:conclusion}

\bibliographystyle{IEEEtran}
\bibliography{IEEEabrv,references}
\appendices

\section{Proof of Lemma~\ref{lem:structure}}
\label{app:proof_structure}

\begin{proof}
By induction over the sequence of operations. (i) and (ii): A row addition is a left-multiplication by an invertible matrix and preserves the rank of any column subset, in particular of $\mathbf{M}$ and of $\mathbf{H}_{\mathrm{aug}}$. An augmentation grows $\mathbf{M}$ to the block-triangular matrix $\bigl[\begin{smallmatrix}\mathbf{M}&\mathbf{0}\\ \mathbf{v}^\top&1\end{smallmatrix}\bigr]$; since the new column is supported only on the new row, both ranks grow by exactly one. (iii): Row additions leave the rowspace unchanged. After augmentation, the appended row is the only row with a nonzero entry in the new column; hence any element of the row space equals zero in that column has zero coefficient on the appended row and thus lies in the previous row space extended by $0$. Starting from $\mathbf{H}$ with no auxiliary columns, the claims follow.
\end{proof}
\section{Proof of Proposition~\ref{prop:correctness}}
\label{app:proof_back_transformation}
\begin{proof}
(i) Performing Gaussian elimination on $\mathbf{M}$ leaves $m_{\mathrm{aug}}-n_{\mathrm{AVN}}=\mathrm{rank}(\mathbf{H})$ rows in $\mathbf{H}_{\mathrm{aug}}$ with 
$0$ in the last $n_{\mathrm{AVN}}$ positions, which by Lemma \ref{lem:structure} (iii) span a subspace of $\mathrm{row}(\mathbf{H})\times\{\mathbf{0}\}$. So, $\mathrm{row}(\mathbf{H}_{\mathrm{map}}) \subseteq \mathrm{row}(\mathbf{H})$. As the non-zero part of $(\mathbf{E}\mathbf{M})$ has rank $n_{\mathrm{AVN}}$, the remaining rows with $0$ in the last $n_{\mathrm{AVN}}$ columns have to have rank $\mathrm{rank}(\mathbf{H})$, thus $\mathrm{row}(\mathbf{H}_{\mathrm{map}}) = \mathrm{row}(\mathbf{H})$.
(ii) Since $\mathbf{s}_{\mathrm{aug}}$ equals to $0$ on the pivot positions,
$\mathbf{E}\mathbf{s}_{\mathrm{aug}}=\mathbf{s}_{\mathrm{aug}}$ (as $\mathbf{E}$ corresponds to adding pivot rows to other rows), so the
augmented system is equivalent to the eliminated one. There, the non-pivot
equations reproduce $\mathbf{H}\hat{\mathbf{e}}=\mathbf{s}$, while the pivot
equations carry an invertible block on the auxiliary columns and, as in
Lemma~\ref{lem:aug}, merely determine $\hat{\mathbf{e}}'$ from
$\hat{\mathbf{e}}$.
\end{proof}
\section{Proof of Lemma~\ref{lem:split}}
\label{app:proof_degeneracy}
\begin{proof}
Both statements follow from the same observation. Consider a coset
$\mathbf{e}_0+\mathbb{V}$ for some subspace $\mathbb{V}$. The value
$\mathbf{c}'^\top\hat{\mathbf{e}}$ is the same for all
$\hat{\mathbf{e}}=\mathbf{e}_0+\boldsymbol{\delta}$ in this coset iff
$\mathbf{c}'^\top\boldsymbol{\delta}=0$ for all
$\boldsymbol{\delta}\in\mathbb{V}$, i.e., iff
$\mathbf{c}'^\top\in\mathbb{V}^\perp$; in this case the constraint cannot
distinguish elements of the coset. Otherwise, there is a
$\boldsymbol{\sigma}_0\in\mathbb{V}$ with
$\mathbf{c}'^\top\boldsymbol{\sigma}_0=1$. Adding $\boldsymbol{\sigma}_0$
to any element stays within the coset but flips the value of
$\mathbf{c}'^\top\hat{\mathbf{e}}$, so both values occur and the constraint
splits the coset into two nonempty parts.

Statement (i) is this observation applied to the solution space
$\mathcal{E}(\mathbf{H}_{\mathrm{DEM}},\mathbf{s}_{\mathrm{DEM}})
=\mathbf{e}+\ker\mathbf{H}_{\mathrm{DEM}}$, using
$(\ker\mathbf{H}_{\mathrm{DEM}})^\perp=\mathrm{Row}(\mathbf{H}_{\mathrm{DEM}})$.
Statement (ii) is the same applied to the degeneracy sets
$\mathbf{e}_0+\mathbb{S}$, where duality gives
$\mathbb{S}^\perp
=(\ker\mathbf{H}_{\mathrm{DEM}}\cap\ker\mathbf{L})^\perp
=\mathrm{Row}(\mathbf{H}_{\mathrm{DEM}})+\mathrm{Row}(\mathbf{L})$.
\end{proof}
\end{document}

%% file: macros.tex
\newcommand{\tp}{^{\mathsf{T}}}



%% file: acronyms.tex
\begin{acronym}
    \acro{ECC}{error-correcting code}
    \acro{MLD}{maximum likelihood decoding}
    \acro{HDD}{hard decision decoding}
    \acro{ML}{maximum likelihood}
    \acro{BP}{belief propagation}
    \acro{BER}{bit error rate}
    \acro{BPSK}{binary phase shift keying}
    \acro{QPSK}{quadrature phase shift keying}
    \acro{AWGN}{additive white Gaussian noise}
    \acro{MSE}{mean squared error}
    \acro{LLR}{log-likelihood ratio}
    \acro{MAP}{maximum a posteriori}
    \acro{BLER}{block error rate}
    \acro{NN}{neural network}
    \acro{DL}{deep learning}
    \acro{BMI}{bit-wise mutual information}
    \acro{LDPC}{low-density parity-check}
    \acro{FEC}{forward error correction}
    \acro{CRC}{cyclic redundancy check}
    \acro{QEC}{quantum error correction}
    \acro{AVN}{auxiliary variable node}
    \acro{ACN}{auxiliary check node}
    \acro{QSC}{quantum stabilizer code}
    \acro{QLDPC}{quantum low-density parity-check}
    \acro{CSS}{Calderbank--Shor--Steane}
    \acro{VN}{variable node}
    \acro{CN}{check node}
    \acro{BP2}{binary belief propagation}
    \acro{BP4}{quaternary belief propagation}
    \acro{FER}{frame error rate}
    \acro{TS}{trapping set}
    \acro{DEM}{detector error model}
    \acro{BB}{bivariate bicycle}
    \acro{LER}{logical error rate}
    \acro{ASCED}{affine subcode ensemble decoding}
    \acro{OSD}{ordered statistics decoding}
    \acro{GARI}{graph augmentation and rewiring for interference}
\end{acronym}

%% file: corporateColours.tex
\definecolor{mittelblau}{RGB}{0, 126, 198}
\definecolor{violettblau}{cmyk}{0.9, 0.6, 0, 0}
\definecolor{rot}{RGB}{238, 28 35}
\definecolor{apfelgruen}{RGB}{140, 198, 62}
\definecolor{gelb}{RGB}{1, 221, 0}
\definecolor{orange}{RGB}{244, 111, 33}
\definecolor{pink}{RGB}{237, 0, 140}
\definecolor{lila}{RGB}{128, 10, 145}
\definecolor{hellgrau}{RGB}{224, 224, 224}
\definecolor{mittelgrau}{RGB}{128, 128, 128}
\definecolor{dunkelgrau}{RGB}{80,80,80}
\definecolor{anthrazit}{RGB}{19, 31, 31}

%% file: plotcyclelists.tex
\pgfplotscreateplotcyclelist{corporate colours markers}{%
rot, every mark/.append style={fill=.!80!rot},mark=*\\%
mittelblau, every mark/.append style={fill=.!80!mittelblau},mark=square*\\%
apfelgruen, every mark/.append style={fill=.!80!apfelgruen},mark=triangle*\\%
orange, mark=star\\%
pink, every mark/.append style={fill=.!80!pink},mark=diamond*\\%
violettblau, every mark/.append style={fill=.!80!violettblau},mark=otimes*\\%
lila, mark=|\\%
gelb, every mark/.append style={fill=.!80!gelb},mark=pentagon*\\%
hellgrau, mark=text,text mark=p\\%
anthrazit, mark=text,text mark=a\\%
}

\pgfplotscreateplotcyclelist{corporate colours markers double}{%
rot, every mark/.append style={fill=.!80!rot},mark=*\\%
rot, dashed, every mark/.append style={fill=.!80!rot,solid},mark=*\\%
mittelblau, every mark/.append style={fill=.!80!mittelblau},mark=square*\\%
mittelblau, dashed, every mark/.append style={fill=.!80!mittelblau,solid},mark=square*\\%
apfelgruen, every mark/.append style={fill=.!80!apfelgruen},mark=triangle*\\%
apfelgruen, dashed, every mark/.append style={fill=.!80!apfelgruen,solid},mark=triangle*\\%
orange, mark=star\\%
orange, dashed, mark=star\\%
pink, every mark/.append style={fill=.!80!pink},mark=diamond*\\%
pink, dashed, every mark/.append style={fill=.!80!pink,solid},mark=diamond*\\%
violettblau, every mark/.append style={fill=.!80!violettblau},mark=otimes*\\%
violettblau, dashed, every mark/.append style={fill=.!80!violettblau,solid},mark=otimes*\\%
lila, mark=|\\%
lila, dashed, mark=|\\%
gelb, every mark/.append style={fill=.!80!gelb},mark=pentagon*\\%
gelb, dashed, every mark/.append style={fill=.!80!gelb,solid},mark=pentagon*\\%
}

\pgfplotsset{
  colormap/magma/.style={%
    /pgfplots/colormap={magma}{%
      rgb=(0.001462, 0.000466, 0.013866)
      rgb=(0.035520, 0.028397, 0.125209)
      rgb=(0.102815, 0.063010, 0.257854)
      rgb=(0.191460, 0.064818, 0.396152)
      rgb=(0.291366, 0.064553, 0.475462)
      rgb=(0.384299, 0.097855, 0.501002)
      rgb=(0.475780, 0.134577, 0.507921)
      rgb=(0.569172, 0.167454, 0.504105)
      rgb=(0.664915, 0.198075, 0.488836)
      rgb=(0.761077, 0.231214, 0.460162)
      rgb=(0.852126, 0.276106, 0.418573)
      rgb=(0.925937, 0.346844, 0.374959)
      rgb=(0.969680, 0.446936, 0.360311)
      rgb=(0.989363, 0.557873, 0.391671)
      rgb=(0.996580, 0.668256, 0.456192)
      rgb=(0.996727, 0.776795, 0.541039)
      rgb=(0.992440, 0.884330, 0.640099)
      rgb=(0.987053, 0.991438, 0.749504)
    },
  },
  colormap/inferno/.style={%
    /pgfplots/colormap={inferno}{%
      rgb=(0.001462, 0.000466, 0.013866)
      rgb=(0.037668, 0.025921, 0.132232)
      rgb=(0.116656, 0.047574, 0.272321)
      rgb=(0.217949, 0.036615, 0.383522)
      rgb=(0.316282, 0.053490, 0.425116)
      rgb=(0.410113, 0.087896, 0.433098)
      rgb=(0.503493, 0.121575, 0.423356)
      rgb=(0.596940, 0.154848, 0.398125)
      rgb=(0.688653, 0.192239, 0.357603)
      rgb=(0.775059, 0.239667, 0.303526)
      rgb=(0.851384, 0.302260, 0.239636)
      rgb=(0.912966, 0.381636, 0.169755)
      rgb=(0.956852, 0.475356, 0.094695)
      rgb=(0.981895, 0.579392, 0.026250)
      rgb=(0.987464, 0.690366, 0.079990)
      rgb=(0.973088, 0.805409, 0.216877)
      rgb=(0.947594, 0.917399, 0.410665)
      rgb=(0.988362, 0.998364, 0.644924)
    },
  },
  colormap/plasma/.style={%
    /pgfplots/colormap={plasma}{%
      rgb=(0.050383, 0.029803, 0.527975)
      rgb=(0.186213, 0.018803, 0.587228)
      rgb=(0.287076, 0.010855, 0.627295)
      rgb=(0.381047, 0.001814, 0.653068)
      rgb=(0.471457, 0.005678, 0.659897)
      rgb=(0.557243, 0.047331, 0.643443)
      rgb=(0.636008, 0.112092, 0.605205)
      rgb=(0.706178, 0.178437, 0.553657)
      rgb=(0.768090, 0.244817, 0.498465)
      rgb=(0.823132, 0.311261, 0.444806)
      rgb=(0.872303, 0.378774, 0.393355)
      rgb=(0.915471, 0.448807, 0.342890)
      rgb=(0.951344, 0.522850, 0.292275)
      rgb=(0.977856, 0.602051, 0.241387)
      rgb=(0.992541, 0.687030, 0.192170)
      rgb=(0.992505, 0.777967, 0.152855)
      rgb=(0.974443, 0.874622, 0.144061)
      rgb=(0.940015, 0.975158, 0.131326)
    },
  },
}

%% file: system.tikzstyles
\tikzstyle{test 1}=[fill=white, draw=black, shape=circle]
\tikzstyle{plus}=[fill=white, draw=black, shape=circle, scale=0.5]

\tikzstyle{arrow 1}=[->, -{triangle 45[length=3mm]}, draw={rgb,255: red,62; green,48; blue,255}, fill={rgb,255: red,62; green,48; blue,255}, ultra thick]
\tikzstyle{box bs}=[-, draw=mittelblau, fill = mittelblau, fill opacity=0.1, ultra thick]
\tikzstyle{box ue}=[-, draw=orange, fill = orange, fill opacity=0.1, ultra thick]
\tikzstyle{box controller}=[-, fill=rot, draw=rot, fill opacity=0.2, ultra thick, rounded corners]
\tikzstyle{box parameter}=[-, fill=apfelgruen!75, draw=apfelgruen, thick]
\tikzstyle{arrow 2}=[->, -{triangle 45[length=3mm]}, semithick]
\tikzstyle{box3}=[-, fill=white]
\tikzstyle{arrow 3}=[->, -{triangle 45[length=3mm]}, semithick, double]
\tikzstyle{arrow 4}=[->, -{triangle 45[length=3mm]}, thick]
\tikzstyle{arrow 5}=[-, thick]
\tikzstyle{arrow 6}=[->, -{triangle 45[length=3mm]}, thick, double]
\tikzstyle{arrow 7}=[-, thick, double]
\tikzstyle{arrow 8}=[-, semithick, double]
\tikzstyle{arrow 9}=[->, -{triangle 45[length=3mm]}, draw={rot}, fill={rot}, ultra thick]
\tikzstyle{arrow 10}=[-, draw={rot}, fill={rot}, ultra thick]

\tikzstyle{box 1}=[-, draw={mittelblau}, fill={mittelblau}, fill opacity=0.1, ultra thick]
\tikzstyle{box 2}=[-, draw={orange}, fill opacity=0.2, ultra thick]
\tikzstyle{box 3}=[-, fill={rot}, draw=rot, fill opacity=0.2, ultra thick, rounded corners]

%% file: fig/system_model_updated.tex
\providecolor{anthrazit}{HTML}{2F3437}
\providecolor{methodblue}{HTML}{2B6CB0}
\providecolor{apfelgruen}{HTML}{8ABD24}

\begin{tikzpicture}[
    font=\footnotesize,
    line cap=round,
    line join=round,
    wire/.style={draw=anthrazit!65,line width=0.42pt},
    cnot/.style={draw=anthrazit,line width=0.72pt},
    bpblock/.style={
        draw=methodblue!85!black,fill=methodblue!9,rounded corners=1.2pt,
        line width=0.8pt,minimum width=3.20cm,minimum height=1.48cm,
        inner sep=3pt,align=center
    },
    mapblock/.style={
        draw=apfelgruen!85!black,fill=apfelgruen!12,rounded corners=1.2pt,
        line width=0.8pt,minimum width=2.55cm,minimum height=0.50cm,
        inner xsep=2.5pt,inner ysep=1.0pt,align=center,font=\scriptsize
    },
    arr/.style={-{Triangle[length=2.0mm,width=1.48mm]},draw=anthrazit,line width=0.8pt},
    control/.style={circle,fill=anthrazit,inner sep=0pt,minimum size=3.0pt},
    target/.style={circle,draw=anthrazit,line width=0.72pt,minimum size=3.8mm,inner sep=0pt,
        path picture={
            \draw[anthrazit,line width=0.72pt](path picture bounding box.north)--(path picture bounding box.south);
            \draw[anthrazit,line width=0.72pt](path picture bounding box.west)--(path picture bounding box.east);
        }},
    meas/.style={draw=anthrazit,fill=white,line width=0.65pt,rounded corners=0.7pt,
        minimum width=4.8mm,minimum height=4.0mm,inner sep=0.4pt,font=\scriptsize},
    annot/.style={font=\scriptsize,text=anthrazit},
    processgroup/.style={draw=anthrazit!45,rounded corners=2pt,line width=0.5pt,inner sep=5pt},
    tedge/.style={draw=anthrazit!60,line width=0.38pt},
    synedge/.style={draw=methodblue!80!black,line width=0.48pt,dashed},
    tvn/.style={circle,draw=anthrazit!75,fill=white,line width=0.45pt,minimum size=2.55mm,inner sep=0pt},
    tcn/.style={rectangle,draw=anthrazit!75,fill=white,line width=0.45pt,minimum size=2.4mm,inner sep=0pt},
    auxvn/.style={tvn,draw=methodblue!90!black,fill=methodblue!28,line width=0.65pt},
    auxcn/.style={tcn,draw=methodblue!90!black,fill=methodblue!28,line width=0.65pt}
]

\foreach \y in {0,-0.40,-0.80} {\draw[wire] (0,\y) -- (2.35,\y);}
\draw[wire] (0,-1.98) -- (2.35,-1.98);
\foreach \x/\ydata in {0.48/0,1.05/-0.40,1.62/-0.80} {
    \draw[cnot] (\x,\ydata) -- (\x,-1.98);
    \node[control] at (\x,\ydata) {};
    \node[target] at (\x,-1.98) {};
}
\node[meas] at (2.35,-1.98) {$M$};
\coordinate (extract-nw) at (-0.08,0.10);
\coordinate (extract-se) at (2.62,-2.20);
\node[processgroup,fit=(extract-nw)(extract-se),label={[annot]above:\textbf{Syndrome extraction}}] (extraction) {};

\node[bpblock] (bp) at (5.82,-0.58) {};
\node[font=\footnotesize\bfseries,text=anthrazit] at ($(bp.north)+(0,-0.21)$) {BP decoder};
\draw[methodblue!35,line width=0.4pt]
    ($(bp.north west)+(0.13,-0.34)$) -- ($(bp.north east)+(-0.13,-0.34)$);

\coordinate (v1) at ($(bp.center)+(-0.73,-0.01)$);
\coordinate (v2) at ($(bp.center)+(-0.31,-0.01)$);
\coordinate (v3) at ($(bp.center)+(0.11,-0.01)$);
\coordinate (c1) at ($(bp.center)+(-0.55,-0.43)$);
\coordinate (c2) at ($(bp.center)+(-0.11,-0.43)$);
\node[annot,anchor=south] at ($(v1)!0.5!(v2)+(0,0.07)$)
    {$\hat{\mathbf e}_{\mathrm{DEM}}$};
\draw[tedge] (v1)--(c1) (v2)--(c1) (v2)--(c2) (v3)--(c2);
\node[tvn] at (v1) {};
\node[tvn] at (v2) {};
\node[auxvn] at (v3) {};
\node[tcn] at (c1) {};
\node[auxcn] at (c2) {};
\node[annot,anchor=west] at ($(bp.center)+(0.38,-0.18)$) {$\mathbf H_{\mathrm{aug}}$};

\coordinate (synstart) at ($(bp.south)+(0,0.06)$);
\coordinate (synbendone) at ($(c1.south)+(0,-0.18)$);
\coordinate (synbendtwo) at ($(c2.south)+(0,-0.18)$);
\coordinate (synpreone) at ($(c1.south)+(0,-0.065)$);
\coordinate (synpretwo) at ($(c2.south)+(0,-0.065)$);

\draw[synedge] (synstart) -- (synstart -| synbendone);
\draw[synedge] (synstart) -- (synstart -| synbendtwo);

\draw[methodblue!80!black,line width=0.48pt]
    ($(synstart -| synbendone)+(-0.7pt,0)$)
    -- (synstart -| synbendone)
    -- ($(synstart -| synbendone)+(0,0.7pt)$);
\draw[methodblue!80!black,line width=0.48pt]
    ($(synstart -| synbendtwo)+(-0.7pt,0)$)
    -- (synstart -| synbendtwo)
    -- ($(synstart -| synbendtwo)+(0,0.7pt)$);

\draw[synedge,dash phase=0pt]
    (synstart -| synbendone) -- (synpreone);
\draw[synedge,dash phase=0pt]
    (synstart -| synbendtwo) -- (synpretwo);
\draw[methodblue!80!black,line width=0.48pt,
      -{Triangle[length=1.15mm,width=0.85mm]},shorten >=0.7pt]
    (synpreone) -- (c1.south);
\draw[methodblue!80!black,line width=0.48pt,
      -{Triangle[length=1.15mm,width=0.85mm]},shorten >=0.7pt]
    (synpretwo) -- (c2.south);

\node[mapblock] (map) at (5.82,-1.98)
{$\mathbf{s}_{\mathrm{aug}}=\mathbf{A}\mathbf{s}_{\mathrm{DEM}}$};

\node[processgroup,fit=(bp)(map),label={[annot]above:\textbf{Classical processing}}] (classical) {};

\draw[arr] (2.65,-1.98)--node[above,annot] {$\mathbf{s}_{\mathrm{DEM}}$} (map.west);
\draw[arr] (map.north)--node[right,annot] {$\mathbf s_{\mathrm{aug}}$} (bp.south);
\draw[arr] (bp.east)--++(0.84,0) node[below,pos=0.72,annot] {$\hat{\mathbf e}_{\mathrm{DEM}}$};
\end{tikzpicture}

%% file: fig/fer_vs_4_cycles.tex
\begin{tikzpicture}

\begin{groupplot}[
group style={
group size=1 by 5,
vertical sep=0.45cm,
x descriptions at=edge bottom,
},
width=0.95\linewidth,
height=0.35\linewidth,
grid=both,
major grid style={line width=0.2pt, draw=gray!45},
minor grid style={line width=0.1pt, draw=gray!20},
tick align=outside,
tick pos=left,
xlabel={\# AVNs},
xmin=0,
clip=false,
enlarge x limits=false,
]

\nextgroupplot[
ylabel={\# 4-cycles},
ymode=log,
xlabel={},
xticklabels={},
]

\addplot[
thick,
mark=*,
color=mittelblau,
] table[
col sep=comma,
x=steps,
y=cycles,
restrict expr to domain={\thisrow{cycles}}{1:1e20},
] {plot_data/remove4_p0p001_r6_it5_s10.csv};

\nextgroupplot[
ylabel={\# edges},
xlabel={},
xticklabels={},
]

\addplot[
thick,
mark=*,
color=mittelblau,
] table[
col sep=comma,
x=steps,
y=edges,
] {plot_data/remove4_p0p001_r6_it5_s10.csv};

\nextgroupplot[
ylabel={$p_L$},
title={$I=5$},
title style={
at={(0.5,0.96)},
anchor=north,
font=\small,
},
ymode=log,
xlabel={},
xticklabels={},
]

\addplot[
thick,
dashed,
mark=*,
mark options={solid},
color=apfelgruen,
] table[
col sep=comma,
x=steps,
y=lfr,
] {plot_data/remove4_p0p001_r6_it5_s08.csv};

\addplot[
thick,
dashed,
mark=square*,
mark options={solid},
color=orange,
] table[
col sep=comma,
x=steps,
y=lfr,
] {plot_data/remove4_p0p001_r6_it5_s09.csv};

\addplot[
thick,
mark=triangle*,
color=mittelblau,
] table[
col sep=comma,
x=steps,
y=lfr,
] {plot_data/remove4_p0p001_r6_it5_s10.csv};

\addplot[
thick,
dashed,
mark=diamond*,
mark options={solid},
color=rot,
] table[
col sep=comma,
x=steps,
y=lfr,
] {plot_data/remove4_p0p001_r6_it5_s11.csv};

\addplot[
thick,
dashed,
mark=pentagon*,
mark options={solid},
color=purple,
] table[
col sep=comma,
x=steps,
y=lfr,
] {plot_data/remove4_p0p001_r6_it5_s12.csv};

\nextgroupplot[
ylabel={$p_L$},
title={$I=20$},
title style={
at={(0.5,0.96)},
anchor=north,
font=\small,
},
ymode=log,
xlabel={},
xticklabels={},
]

\addplot[
thick,
dashed,
mark=*,
mark options={solid},
color=apfelgruen,
] table[
col sep=comma,
x=steps,
y=lfr,
] {plot_data/remove4_p0p001_r6_it20_s08.csv};

\addplot[
thick,
dashed,
mark=square*,
mark options={solid},
color=orange,
] table[
col sep=comma,
x=steps,
y=lfr,
] {plot_data/remove4_p0p001_r6_it20_s09.csv};

\addplot[
thick,
mark=triangle*,
color=mittelblau,
] table[
col sep=comma,
x=steps,
y=lfr,
] {plot_data/remove4_p0p001_r6_it20_s10.csv};

\addplot[
thick,
dashed,
mark=diamond*,
mark options={solid},
color=rot,
] table[
col sep=comma,
x=steps,
y=lfr,
] {plot_data/remove4_p0p001_r6_it20_s11.csv};

\addplot[
thick,
dashed,
mark=pentagon*,
mark options={solid},
color=purple,
] table[
col sep=comma,
x=steps,
y=lfr,
] {plot_data/remove4_p0p001_r6_it20_s12.csv};

\nextgroupplot[
ylabel={$p_L$},
title={$I=100$},
title style={
at={(0.5,0.96)},
anchor=north,
font=\small,
},
ymode=log,
xlabel={\# AVNs},
]

\addplot[
thick,
dashed,
mark=*,
mark options={solid},
color=apfelgruen,
] table[
col sep=comma,
x=steps,
y=lfr,
] {plot_data/remove4_p0p001_r6_it100_s08.csv};

\addplot[
thick,
dashed,
mark=square*,
mark options={solid},
color=orange,
] table[
col sep=comma,
x=steps,
y=lfr,
] {plot_data/remove4_p0p001_r6_it100_s09.csv};

\addplot[
thick,
mark=triangle*,
color=mittelblau,
] table[
col sep=comma,
x=steps,
y=lfr,
] {plot_data/remove4_p0p001_r6_it100_s10.csv};

\addplot[
thick,
dashed,
mark=diamond*,
mark options={solid},
color=rot,
] table[
col sep=comma,
x=steps,
y=lfr,
] {plot_data/remove4_p0p001_r6_it100_s11.csv};

\addplot[
thick,
dashed,
mark=pentagon*,
mark options={solid},
color=purple,
] table[
col sep=comma,
x=steps,
y=lfr,
] {plot_data/remove4_p0p001_r6_it100_s12.csv};

\end{groupplot}

\end{tikzpicture}

\vspace{0.4em}

\begin{tikzpicture}
\draw[apfelgruen, thick, dashed] (0.0,0) -- (0.45,0);
\filldraw[apfelgruen] (0.225,0) circle (1.6pt);
\node[right] at (0.50,0) {\small $\alpha=0.8$};

\draw[orange, thick, dashed] (1.8,0) -- (2.25,0);
\filldraw[orange] (1.98,-0.045) rectangle (2.07,0.045);
\node[right] at (2.30,0) {\small $\alpha=0.9$};

\draw[mittelblau, thick] (3.6,0) -- (4.05,0);
\filldraw[mittelblau] (3.825,0.065) -- (3.745,-0.055) -- (3.905,-0.055) -- cycle;
\node[right] at (4.10,0) {\small $\alpha=1.0$};

\draw[rot, thick, dashed] (5.4,0) -- (5.85,0);
\filldraw[rot] (5.625,0.075) -- (5.705,0) -- (5.625,-0.075) -- (5.545,0) -- cycle;
\node[right] at (5.90,0) {\small $\alpha=1.1$};

\draw[purple, thick, dashed] (7.2,0) -- (7.65,0);
\node[
regular polygon,
regular polygon sides=5,
fill=purple,
draw=purple,
minimum size=4.2pt,
inner sep=0pt
] at (7.425,0) {};
\node[right] at (7.70,0) {\small $\alpha=1.2$};

\end{tikzpicture}

%% file: fig/fer_ensemble.tex
\pgfplotstableread{
E T LFR
1  1  0.004729908101955171
2  1  0.004533838244279664
4  1  0.004279235756078137
6  1  0.004103162986920816
8  1  0.004083608950770978
10 1  0.004044506636413536
12 1  0.0040054119965501345

1  2  0.005843977058334726
2  2  0.005424895569168031
4  2  0.004987706919882151
6  2  0.004627274123821357
8  2  0.004399968145913702
10 2  0.004191832036325449
12 2  0.0040783948653222435

1  3  0.004522667690923909
2  3  0.002667295687064919
4  3  0.001212302932586451
6  3  0.0007564147212378858
8  3  0.0005376496099147587
10 3  0.00041353270068844594
12 3  0.00035863829662741153

1  4  0.002850423937461599
2  4  0.0012959184045384653
4  4  0.0006068706383701716
6  4  0.000423767513223261
8  4  0.00034742304921819844
10 4  0.0003038107266729595
12 4  0.00030623338392221644

1  5  0.0028094485313890916
2  5  0.0012048243811895176
4  5  0.0005488004521528866
6  5  0.000438919783110725
8  5  0.0003817455201009512
10 5  0.00037873680125266596
12 5  0.0003486521028355227

1  6  0.003612437656821288
2  6  0.0020055443669430018
4  6  0.0010716262002568033
6  6  0.0007938008379025918
8  6  0.0006837786800611978
10 6  0.0006212930626675606
12 6  0.0005788139982193519

1  7  0.003151057188248929
2  7  0.0016556601868382703
4  7  0.0010735409159771114
6  7  0.0009193122480711802
8  7  0.0008261154234791013
10 7  0.0007938651083714277
12 7  0.0007759505152781232

1  8  0.00292280745564788
2  8  0.0018650098155877926
4  8  0.0012590004501858987
6  8  0.0010398260659671976
8  8  0.0009465469804661453
10 8  0.0009181663349755453
12 8  0.000857364236321212

1  9  0.002731766855875617
2  9  0.0018202789656158735
4  9  0.0013660911316589752
6  9  0.0012539242493091463
8  9  0.0012197990010215376
10 9  0.0011418203176402075
12 9  0.0010882276239980726

1  10 0.00435777621948763
2  10 0.0030417790842250936
4  10 0.002275285984215003
6  10 0.0016939040850176834
8  10 0.0015791799119483363
10 10 0.0014847506476626249
12 10 0.0014308111088372488

1  11 0.00578642775781224
2  11 0.0035714928053628414
4  11 0.0024306241962480657
6  11 0.0018124741822073132
8  11 0.0017199165716691445
10 11 0.0017199165716691445
12 11 0.0016582353276057837
}\adaptiveData

\pgfplotstableread{
E LFR
1  0.002797266724234082
2  0.002190611197689818
3  0.001556431238996181
4  0.001354103387415484
5  0.0009053892186147161
6  0.0008908706599951577
7  0.0008861340640485826
8  0.0007388264445458148
9  0.0006228596274587828
10 0.0005572587940854135
11 0.0005544779023927138
12 0.0005323400728040051
}\staticData

\begin{tikzpicture}

\begin{axis}[
    width=0.5\textwidth,
    height=0.4\textwidth,
    view={45}{28},
    xlabel={Ensemble size $E$},
    ylabel={\# prior steps $T$},
    zlabel={$p_L$},
    xmin=0.5,
    xmax=12.5,
    ymin=-0.2,
    ymax=11.5,
    y dir=reverse,
    zmode=log,
    log basis z={10},
    zmin=2e-4,
    zmax=7e-3,
    grid=both,
    tick align=outside,
    xtick={1,2,4,6,8,10,12},
    ytick={0,1,3,5,7,9,11},
    yticklabels={,1,3,5,7,9,11},
    colormap name=mysummer,
    clip=false,
]

\addplot3[
    only marks,
    mark=*,
    mark size=2.7pt,
    color=mittelblau,
    forget plot,
] coordinates {
    (1,0,0.0023731905391790598)
};

\addplot3[
    very thick,
    mark=*,
    mark size=2.0pt,
    color=rot,
    opacity=1,
    forget plot,
] table[
    x=E,
    y expr=0,
    z=LFR,
] {\staticData};

\addplot3[
    surf,
    mesh/cols=7,
    mesh/rows=11,
    mesh/ordering=x varies,
    shader=interp,
    opacity=0.5,
    point meta=z,
    z buffer=sort,
] table[
    x=E,
    y=T,
    z=LFR,
] {\adaptiveData};

\addplot3[
    surf,
    mesh/cols=7,
    mesh/rows=11,
    mesh/ordering=x varies,
    shader=interp,
    opacity=0.85,
    point meta=z,
    z buffer=sort,
] table[
    x=E,
    y=T,
    z=LFR,
] {\adaptiveData};

\pgfplotsinvokeforeach{1,2,4,6,8,10,12}{
    \addplot3[
        gray!65,
        line width=0.45pt,
        mark=none,
        forget plot,
        unbounded coords=discard,
    ] table[
        x=E,
        y=T,
        z=LFR,
        restrict expr to domain={\thisrow{E}}{#1:#1},
    ] {\adaptiveData};
}

\pgfplotsinvokeforeach{1,2,3,4,5,6,7,8,9,10,11}{
    \addplot3[
        gray!65,
        line width=0.45pt,
        mark=none,
        forget plot,
        unbounded coords=discard,
    ] table[
        x=E,
        y=T,
        z=LFR,
        restrict expr to domain={\thisrow{T}}{#1:#1},
    ] {\adaptiveData};
}

\addplot3[
    only marks,
    mark=*,
    mark size=0.65pt,
    color=gray!70,
    forget plot,
] table[
    x=E,
    y=T,
    z=LFR,
] {\adaptiveData};

\node[
    anchor=west,
    font=\small,
    text=mittelblau,
] (hfree) at (axis description cs:0.8,1)
    {$\mathbf{H}_{\mathrm{free}}$};

\draw[
    -{Latex[length=3.4mm,width=3.0mm]},
    line width=1.5pt,
    color=mittelblau,
] (hfree.west)
  -- (axis cs:1,0,0.002407298439506156);

\end{axis}
\end{tikzpicture}

%% file: fig/fer_vs_p_72_12.tex
\begin{tikzpicture}

\begin{axis}[
   width=\linewidth,
    height=0.8\linewidth,
    grid=both,
    major grid style={line width=0.2pt, draw=gray!45},
    minor grid style={line width=0.1pt, draw=gray!20},
    tick align=outside,
    tick pos=left,
    xlabel={Physical error rate $p$},
    ylabel={$p_L$},
    x dir=reverse,
    xmin=1e-3,
    xmax=6e-3,
    restrict x to domain=1e-3:6e-3,
    unbounded coords=discard,
    ymode=log,
    legend style={
        at={(0.03,0.03)},
        anchor=south west,
        draw=none,
        fill=white,
        fill opacity=0.8,
        text opacity=1,
        font=\scriptsize,
    },
    legend cell align=left,
]

\addplot[
    mark=*,
    mark options={fill=rot, draw=rot},
    thick,
    color=rot
] table[
    x=p,
    y=LFR,
    col sep=comma,
] {plot_data/72_12_p_sweep_r6_it20_original_bp.csv};

\addplot[
    mark=square*,
    mark options={fill=mittelblau, draw=mittelblau},
    thick,
    color=mittelblau
] table[
    x=p,
    y=LFR,
    col sep=comma,
] {plot_data/72_12_p_sweep_r6_it20_remove4_bp.csv};

\addplot[
    mark=triangle*,
    mark options={fill=apfelgruen, draw=apfelgruen},
    thick,
    color=apfelgruen
] table[
    x=p,
    y=LFR,
    col sep=comma,
] {plot_data/72_12_p_sweep_r6_it20_remove4_bp_osd0.csv};
\addplot[
    mark=triangle*,
    mark options={fill=anthrazit, draw=anthrazit},
    thick,
    color=anthrazit
] table[
    x=p,
    y=LFR,
    col sep=comma,
] {plot_data/72_12_p_sweep_r6_it20_prior5_E24.csv};
\addplot[
    mark=diamond*,
    mark options={fill=orange!85!black, draw=orange!85!black},
    thick,
    color=orange!85!black
] table[
    x=p,
    y=LFR,
    col sep=comma,
] {plot_data/72_12_static_p_sweep_r6_it20_E12.csv};

\legend{
    {$\mathrm{BP}(\mathbf{H}_{\mathrm{DEM}})$},
    {$\mathrm{BP}(\mathbf{H}_{\mathrm{free}})$},
    {$\mathrm{BP{+}OSD\text{-}0}(\mathbf{H}_{\mathrm{free}})$},
    {Adaptive BP ensemble ($E=24$, $T=5$)},
    {Static BP ensemble ($E=12$)}
}
\end{axis}

\end{tikzpicture}

%% file: fig/fer_vs_p_90_8.tex
\begin{tikzpicture}

\begin{axis}[
   width=\linewidth,
    height=0.8\linewidth,
    grid=both,
    major grid style={line width=0.2pt, draw=gray!45},
    minor grid style={line width=0.1pt, draw=gray!20},
    tick align=outside,
    tick pos=left,
    xlabel={Physical error rate $p$},
    ylabel={$p_L$},
    x dir=reverse,
    xmin=2.5e-3,
    xmax=6e-3,
    restrict x to domain=2.5e-3:6e-3,
    unbounded coords=discard,
    ymode=log,
    legend style={
        at={(0.03,0.03)},
        anchor=south west,
        draw=none,
        fill=white,
        fill opacity=0.8,
        text opacity=1,
        font=\scriptsize,
    },
    legend cell align=left,
]

\addplot[
    mark=*,
    mark options={fill=rot, draw=rot},
    thick,
    color=rot
] table[
    x=p,
    y=LFR,
    col sep=comma,
] {plot_data/90_8_p_sweep_r10_it20_original_bp.csv};

\addplot[
    mark=square*,
    mark options={fill=mittelblau, draw=mittelblau},
    thick,
    color=mittelblau
] table[
    x=p,
    y=LFR,
    col sep=comma,
] {plot_data/90_8_p_sweep_r10_it20_remove4_bp.csv};

\addplot[
    mark=triangle*,
    mark options={fill=apfelgruen, draw=apfelgruen},
    thick,
    color=apfelgruen
] table[
    x=p,
    y=LFR,
    col sep=comma,
] {plot_data/90_8_p_sweep_r10_it20_remove4_bp_osd0.csv};

\addplot[
    mark=triangle*,
    mark options={fill=anthrazit, draw=anthrazit},
    thick,
    color=anthrazit
] table[
    x=p,
    y=LFR,
    col sep=comma,
] {plot_data/90_8_p_sweep_r10_it20_prior5_E128.csv};

\legend{
    {$\mathrm{BP}(\mathbf{H}_{\mathrm{DEM}})$},
    {$\mathrm{BP}(\mathbf{H}_{\mathrm{free}})$},
    {$\mathrm{BP{+}OSD\text{-}0}(\mathbf{H}_{\mathrm{free}})$},
    {Adaptive BP ensemble ($E=128$, $T=5$)}
}
\end{axis}

\end{tikzpicture}